\documentclass[11pt, letterpaper]{article}

\usepackage[utf8]{inputenc}
\usepackage[T1]{fontenc}
\usepackage{geometry}
\geometry{margin=1in} 
\usepackage{times}    
\usepackage{algorithm}
\usepackage{algpseudocode}
\usepackage{amsmath, amssymb, amsthm}
\usepackage{graphicx}
\usepackage[usenames,dvipsnames]{xcolor} 
\usepackage{tikz}                         
\usepackage{enumitem}                     
\usepackage{cite}                         

\usepackage[colorlinks=true, linkcolor=blue!70!black, citecolor=blue!70!black, urlcolor=blue!70!black]{hyperref}

\usepackage{style} 

\theoremstyle{definition}
\newtheorem{theorem}{Theorem}
\newtheorem{lemma}{Lemma}

\theoremstyle{definition}
\newtheorem{definition}{Definition}


\usepackage{authblk} 

\title{\textbf{Composite Adaptive Control Barrier Functions\\ for Safety-Critical Systems with Parametric Uncertainty}}

\author{Mohammadreza Kamaldar\thanks{Corresponding author: \texttt{mkamaldar@southalabama.edu}}}

\affil{Department of Mechanical, Aerospace, and Biomedical Engineering\\ University of South Alabama, Mobile, AL 36688, USA}

\date{} 


\begin{document}
	
	\maketitle
	
	\begin{abstract}
Control barrier functions guarantee safety but typically require accurate system models. Parametric uncertainty invalidates these guarantees. Existing robust methods maintain safety via worst-case bounds, limiting performance, while modular learning schemes decouple estimation from safety, permitting state violations during training. This paper presents the composite adaptive control barrier function (CaCBF) algorithm for nonlinear control-affine systems subject to linear parametric uncertainty. We derive adaptation laws from a composite energy function comprising a logarithmic safety barrier, a control Lyapunov function, and a parameter error term. We prove that CaCBF guarantees the forward invariance of the safe set and the uniform boundedness of the closed-loop system. This safety guarantee holds without requiring parameter convergence. Simulations of adaptive cruise control, an omnidirectional robot, and a planar drone demonstrate the efficacy of the CaCBF algorithm.
	\end{abstract}

\section{Introduction}
\label{sec:intro}

Control barrier functions (CBFs) provide a mathematical mechanism for enforcing safety in autonomous systems \cite{ames2019cbf, ames2016control}. By acting as a pointwise filter, a CBF monitors a nominal control signal and modulates it only when the system approaches the boundary of a safe set. This operation renders the safe set forward invariant if the underlying mathematical model accurately represents the physical system. However, physical systems inevitably deviate from their models due to parametric uncertainties, such as variable payloads, unmodeled friction, or aerodynamic drag. This discrepancy challenges the formal safety guarantees of standard CBFs. When the model parameters are incorrect, the calculated safe control input may drive the system into the unsafe region. This uncertainty is typically addressed through two primary paradigms: robust control, bounding the uncertainty, or adaptive control, estimating the parameters.

Robust formulations, such as those established by Jankovic \cite{jankovic2018robust} and Zhao et al. \cite{zhao2020adaptive}, maintain safety by enforcing constraints against the worst-case realization of the uncertainty. Recent developments have extended robust CBFs to accommodate complex dynamics. For instance, Buch et al. \cite{buch2021robust} addressed sector-bounded inputs via second-order cone programs, Xiao and Belta \cite{xiao2021high} developed formulations for high-order relative degree systems, and Ames et al. \cite{ames2019cbf} introduced input-to-state safety conditions for bounded disturbances. While theoretically rigorous, robustness imposes a geometric cost. As the bounds on parameter uncertainty expand, the subset of control inputs certified as safe contracts. In scenarios with significant uncertainty, this conservatism can render the safe control set empty, causing the solver to fail even when a safe solution physically exists.
Furthermore, robust methods compel the system to operate with reduced efficiency, braking earlier, or moving more slowly, regardless of whether the environment is hostile or benign \cite{zhou1998essentials}.

To mitigate this conservatism, researchers utilize online estimation to reduce uncertainty bounds. Modular architectures, such as the robust estimator framework proposed by Das and Burdick \cite{das2025robust} or the mea\-sure\-ment-robust CBFs developed by Dean et al. \cite{dean2020guaranteeing}, employ learning modules to refine parameter estimates over time. These methods separate the estimation process from the control logic. The estimator typically minimizes a prediction error metric, such as the mean squared error between predicted and observed states \cite{ljung1999system}. While this improves model accuracy, the separation of timescales allows for transient estimation errors independent of the safety boundary. A transient estimation error that occurs while the system is near the boundary can induce a safety violation before the estimator converges.

Integrated approaches attempt to couple estimation and control more tightly. Taylor and Ames \cite{taylor2020adaptive} introduced Adaptive CBFs to preserve safety under parametric uncertainty, a formulation extended by Lopez and Slotine \cite{lopez2023unmatched} to handle unmatched uncertainties via certainty equivalence. Other variations include the dynamic penalty functions employed by Xiao et al. \cite{xiao2022adaptive}, the high-order adaptive formulations by Cohen et al. \cite{cohen2022high}, and the auxiliary systems used by Cheng et al. \cite{cheng2024safety} and Zhang et al. \cite{zhang2025safety} for specific applications like electric vehicles and marine vessels. Recently, Cheng et al. \cite{cheng2026resilient} combined adaptive CBFs with disturbance observers to manage unstructured noise. Despite these advances, many state-of-the-art methods adopt a modular structure where the adaptation law (e.g., recursive least squares or gradient descent) remains agnostic to the barrier function, as seen in \cite{gutierrez2024adaptive, lopez2020robust}. The learning process reduces parameter error globally rather than prioritizing accuracy in the critical regions where safety is threatened.

Parallel to these efforts, data-driven and nonlinear control techniques seek to tighten robust bounds through experience. Zeng et al. \cite{zeng2024robust} applied composite learning for robotic manipulators, Gutierrez et al. \cite{gutierrez2025gp} utilized real-time Gaussian Process modeling, and Chriat et al. \cite{chriat2023optimality} leveraged reinforcement learning to optimize class-$\mathcal{K}$ functions. Similarly, Barrier Lyapunov Functions (BLFs), explored by Jiang et al. \cite{jiang2018adaptive} and others \cite{lafflitto2018barrier, rodrigues2021adaptive, an2017barrier}, enforce constraints on tracking errors. However, BLF methods generally require the control architecture to follow a strict backstepping structure \cite{krstic1995nonlinear}, which constrains their application to general, optimization-based safety filters for arbitrary nominal controllers.

This paper introduces the \textit{composite adaptive control barrier function} (CaCBF) framework to resolve the conflict between robust conservatism and transient learning risks. Instead of appending an estimator to a fixed controller, we unify parameter estimation and safety certification into a single Lyapunov-based design. We construct a composite energy function that sums a logarithmic barrier potential, representing the proximity to constraint violation, a control Lyapunov function, and a quadratic parameter error term. We then derive an update law that strictly dissipates the total energy. This establishes a feedback loop where the adaptation is driven not only by prediction error but also by the gradient of the safety barrier. As the system approaches the boundary, the effective learning rate adjusts to prioritize the parameters required to maintain safety.

The primary contribution of this work is a unified safety-adaptation architecture. Unlike modular schemes, the CaCBF creates a direct coupling where the risk of violation accelerates the adaptation process. We prove that this formulation guarantees the forward invariance of the safe set for nonlinear systems with linear-in-parameter uncertainties. Crucially, this safety guarantee holds regardless of parameter convergence, eliminating the requirement for persistence of excitation found in classical adaptive control \cite{sastry2011adaptive, ioannou2012robust}. We validate the framework through three numerical case studies: adaptive cruise control with unknown drag, an omnidirectional robot avoiding obstacles under unknown drift, and a planar drone navigating crosswinds. The results demonstrate that CaCBF achieves set utilization comparable to exact-model methods and  outperforms robust baselines, confirming its ability to recover performance in uncertain environments without compromising safety.

\textbf{Notation.} We denote the set of real numbers by $\BBR$ and the $n$-dimensional Euclidean space by $\BBR^n$. The Euclidean norm is denoted by $\|\cdot\|$. For a symmetric matrix $P \in \BBR^{n \times n}$, we write $P \succ 0$ (resp. $P \succeq 0$) if $P$ is positive definite (resp. positive semidefinite). Given $P \succ 0$, the weighted Euclidean norm is $\|z\|_P \triangleq \sqrt{z^\rmT P z}$. The $n \times n$ identity matrix is denoted by $I_n$.

We denote the limit from the left by $\lim_{\varepsilon \uparrow 0}$ and from the right by $\lim_{\varepsilon \downarrow 0}$. A continuous function $\alpha\colon [0,a) \to [0,\infty)$ is of \textit{class $\mathcal{K}$} if it is strictly increasing and $\alpha(0)=0$. If $a=+\infty$ and $\lim_{r\to+\infty}\alpha(r)=+\infty$, it is of \textit{class $\mathcal{K}_\infty$}. In addition, for $b>0$, a continuous function $\alpha\colon (-b,a) \to \BBR$ is of \textit{extended class $\mathcal{K}_\rme$} if $\alpha(0)=0$ and it is strictly increasing.

Let $h\colon \BBR^n \to \BBR$, $f\colon \BBR^n \to \BBR^n$, and $G\colon \BBR^n \to \BBR^{n \times m}$ be continuously differentiable functions. 
For $x\in\BBR^n$, let
$\nabla h(x) \triangleq [ \frac{\partial h}{\partial x_{(1)}}~\cdots~\frac{\partial h}{\partial x_{(n)}} ]^\rmT\in\BBR^{n}$ denote the gradient vector of $h$ with respect to $x$.
In addition, the Lie derivative of $h$ along $f$ is $L_f h(x) \triangleq \nabla h(x)^\rmT f(x)\in\BBR$. Similarly, the Lie derivative of $h$ along $G$ is  $L_G h(x) \triangleq \nabla h(x)^\rmT G(x)\in\BBR^{1\times m}$. Finally, for an optimization problem subject to $g(z) \leq 0$, a point $z_*$ is \textit{strictly feasible} if $g(z_*) < 0$.

\section{Problem Formulation}
\label{sec:problem_formulation}

Consider the nonlinear control-affine system subject to linear parametric uncertainty given by
\begin{equation}
	\dot{x} (t) = f(x(t)) + F(x(t))\theta_* + G(x(t))u(t), \label{eq:sys_dyn}
\end{equation}
where, for all $t\ge0$, $x(t) \in \mathcal{X} \subseteq \BBR^n$ is the state, and $u(t) \in \mathcal{U} \subseteq \BBR^m$ is the control input. The functions $f\colon \mathcal{X} \to \BBR^n$, $F\colon \mathcal{X} \to \BBR^{n \times p}$, and $G\colon \mathcal{X} \to \BBR^{n \times m}$ are known and continuously differentiable, while $\theta_* \in \BBR^p$ represents a vector of constant, unknown parameters.
We impose the following assumption:

\begin{enumerate}[label=(A\arabic*),left=10pt]
	\item \label{assum:theta} There exists a known constant $\theta_{\max} > 0$ such that  $\|\theta_*\| \leq \theta_{\max}$.
\end{enumerate}

Assumption \ref{assum:theta} implies the existence of a known compact convex set
\begin{equation}
	\Theta \triangleq \{ \theta \in \BBR^p\colon \|\theta\| \leq \theta_{\max} \}\label{eq_Theta}
\end{equation}
such that $\theta_*\in\Theta$.
The dynamics in \eqref{eq:sys_dyn} capture a class of physical systems characterized by linear-in-parameter uncertainty.
Examples include adaptive cruise control, where $\theta_*$ represents unknown aerodynamic drag and rolling resistance coefficients; robotic manipulators, where $\theta_*$ denotes unknown viscous friction or payload mass; and drone navigation, where $\theta_*$ corresponds to unknown crosswind forces. We provide detailed mathematical models and numerical simulations for these specific applications in Section \ref{sec:simulation}.

To encode the safety objective, let the \textit{safe set} $\mathcal{C}\subset\SX$ be the superlevel set of a continuously differentiable function $h\colon \SX \to \BBR$ defined by
\begin{align}
	\mathcal{C} &\isdef \{ x \in \SX\colon h(x) \geq 0 \},
	\label{eq_safe_set}
\end{align}
and let $\bd \mathcal{C} \isdef \{ x \in \SX\colon h(x) = 0 \}$ and $\Int\mathcal{C} \isdef \{ x \in \SX\colon h(x) > 0 \}$ denote the boundary and interior of $\mathcal{C}$, respectively. 

Let $s\colon[0,\infty)\times \SX\to\SX$ be such that, for all $t\ge0$, $s(t,x(0))$ is the  \textit{trajectory} of \eqref{eq:sys_dyn}  corresponding to each initial condition $x(0)\in\SX$. The primary objective is to synthesize a control law $u\colon\SX\to\SU$ such that  $\mathcal{C}$ is \textit{forward invariant} with respect to the dynamics \eqref{eq:sys_dyn}, that is, for all initial conditions $x(0) \in \mathcal{C}$ and all $t > 0$, $s(t,x(0)) \in \mathcal{C}$.
Mathematically, a sufficient condition for forward invariance of $\SC$ with respect to \eqref{eq:sys_dyn}  is provided by Nagumo's Theorem \cite{ames2019cbf}, which requires the vector field $\dot x$ to point into the set at the boundary, that is, for all $x \in \bd \mathcal{C}$, $\nabla h(x)^\rmT\dot x\ge 0$. Equivalently, if, for all $x \in \bd \mathcal{C}$,
\begin{equation}
	\dot{h}(x, u(x)) \geq 0,\label{cond_Nag}
\end{equation}
then, $\SC$ is forward invariant with respect to \eqref{eq:sys_dyn}.
To ensure \eqref{cond_Nag} holds and to regularize the closed-loop behavior within $\Int\mathcal{C}$, CBFs enforce the stronger condition that, for all $x \in \mathcal{C},$
\begin{equation}
	\dot{h}(x, u(x)) \geq -\alpha(h(x)), 
	\label{eq_hsafe}
\end{equation}
where $\alpha \in \mathcal{K}_\rme$ is locally Lipschitz.
Note that \eqref{eq_hsafe} implies that as the state $x$ approaches the boundary of $\SC$ (i.e., as $h(x) \downarrow 0$), the permissible rate of decay of the safety margin approaches zero (i.e., $-\alpha(h(x)) \uparrow 0$). Based on the invariance condition \eqref{eq_hsafe}, we  define a CBF for the system \eqref{eq:sys_dyn} as follows.

\begin{definition}[]
	\label{def:cbf}
	A continuously differentiable function $h\colon \SX \to \BBR$ is a CBF for the system \eqref{eq:sys_dyn} and the safe set $\mathcal{C}$ defined by \eqref{eq_safe_set},  if there exists a  locally Lipschitz function $\alpha \in \mathcal{K}_\rme$ such that, for all $x \in \mathcal{C}$, the set
	\begin{equation}
		\SU_{\rm cbf}(x) \isdef \left\{ \bar u \in \mathcal{U}\colon \dot{h}(x, \bar u) \geq -\alpha(h(x)) \right\}\label{eq_KCBF}
	\end{equation}
	is non-empty, where the time derivative $\dot{h}$ is along the trajectories of \eqref{eq:sys_dyn}, and is given by
	\begin{equation}
		\dot{h}(x, \bar u) = L_f h(x) + L_{F}h(x)\theta_* + L_G h(x)\bar u.
	\end{equation}
\end{definition}

The result connecting Definition \ref{def:cbf} to safety is summarized in Theorem \ref{thm:safety} in Section \ref{sec:analysis}.
Note that \eqref{eq_KCBF} relies on $\dot{h}$, which explicitly depends on the parameter $\theta_*$. Since $\theta_*$ is unknown, the set $\SU_{\rm cbf}$ is not directly computable. This motivates the design of the composite adaptive CBF in the following section, which simultaneously estimates $\theta_*$, drives the state toward a desired equilibrium, and enforces safety.
To characterize the control objective of stabilizing the system states to the origin, we consider the following definition.

\begin{definition}[]
	\label{def:clf}
	A continuously differentiable function $V\colon \SX\to [0,\infty)$ is a CLF for the system \eqref{eq:sys_dyn} if $V$ is positive definite, radially unbounded, and, there exists $\lambda>0$ such that, for all $x\in\SX,$
	\begin{equation}
		\inf_{u \in \mathcal{U}} \left( L_f V(x) + L_F V(x)\theta_*+ L_G V(x) u \right) \le -\lambda V(x).
		\label{eq:clf_condition}
	\end{equation}
\end{definition}

Note that condition \eqref{eq:clf_condition} ensures the existence of control inputs that drive the system state to the origin. To ensure the stabilization objective is compatible with the safety constraints, we consider the standard feasibility condition that the equilibrium point lies within the interior of the safe set, i.e., $h(0) > 0$. In the following section, we unify this stabilization objective with the safety guarantees of Definition \ref{def:cbf} to derive the proposed composite adaptive framework.

\section{Composite Adaptive Control Barrier Functions}
\label{sec:adaptive_cbf}

In this section, we develop the composite adaptive control barrier function (CaCBF) framework. Unlike robust approaches that rely on static, worst-case bounds, we design an adaptation law that dynamically reduces conservatism while guaranteeing safety. We term this framework \textit{composite} because it unifies three coupled objectives: \textit{safety}, which enforces forward invariance via a logarithmic barrier; \textit{stability}, which drives the state to equilibrium; and \textit{adaptation}, driven by the minimization of an error-estimation cost, to allow the system to operate closer to the safety boundary without violating it.

Let $\hat \theta\colon [0,\infty)\to\BBR^p$ be such that, for all $t\ge0$, $\hat{\theta}(t)$ denotes the instantaneous estimate of the unknown parameter $\theta_*$. In addition, let $\tilde \theta\colon [0,\infty)\to\BBR^p$, defined by $\tilde{\theta}(t) \isdef \hat{\theta}(t)-\theta_*$, denote the instantaneous  error.
Using Lie derivatives and suppressing time dependence for brevity, the time derivative of $h$ along the trajectories of \eqref{eq:sys_dyn} is given by
\begin{equation}
	\dot{h}(x) = L_f h(x) + L_{F}h(x)\theta_* + L_G h(x)u.
	\label{eq_doth}
\end{equation}
Substituting $\theta_*=\hat \theta-\tilde\theta$ into \eqref{eq_doth} yields
\begin{align}
	\dot{h}(x) &= L_f h(x) + L_{F}h(x)(\hat{\theta} - \tilde{\theta}) + L_G h(x) u \nonumber \\
	&= \dot h_{\hat \theta}(x,\hat \theta) -\psi (x) \tilde{\theta},
	\label{eq:h_dynamics_split}
\end{align}
where $\psi\colon \SX \to \BBR^{1\times p}$ is the \textit{safety regressor} defined by
\begin{equation}
	\psi(x) \isdef L_{F}h(x),
\end{equation}
and $\dot h_{\hat \theta}\colon\SX\times\BBR^p\to\BBR$ is defined by
\begin{align}
	\dot h_{\hat \theta}(x,\hat\theta) \isdef L_f h(x) + \psi(x)\hat{\theta} + L_G h(x)u,\label{eq_doth_hat}
\end{align}
which represents the derivative of the barrier function based on the current parameter estimate $\hat \theta$.

To simultaneously ensure safety, promote stability, and bound the parameter error, we introduce the CaCBF, denoted by $V_\rmc\colon \SX \times \BBR^p \to \BBR$ and defined as
\begin{equation}
	V_\rmc(x, \hat{\theta}) \isdef -\ln\left(\frac{h(x)}{1+h(x)}\right) + \kappa V(x) + \frac{1}{2}\tilde{\theta}^\rmT \Gamma^{-1} \tilde{\theta},
	\label{eq:composite_barrier}
\end{equation}
where $h$ is the CBF used to define the safe set $\SC$ in \eqref{eq_safe_set}, $V\colon\SX\to[0,\infty)$ is a candidate CLF for the system \eqref{eq:sys_dyn}, $\kappa > 0$ is a weighting parameter, and $\Gamma \in \BBR^{p \times p}$ is a positive-definite adaptation gain matrix. Note that $V_\rmc$ contains a logarithmic barrier function. This construction relies on the property that the barrier term becomes unbounded as the system approaches the safety boundary, that is,
\begin{equation*}
	\lim_{x\to\bd \SC} \ln\left(\frac{h(x)}{1+h(x)}\right) = -\infty.
\end{equation*}
Consequently, if $V_\rmc$ remains bounded along the trajectories of \eqref{eq:sys_dyn}, the state $x$ is guaranteed to remain within the safe set $\SC$. This implication is formally established in Theorem \ref{thm:main_results0} in Section \ref{sec:analysis}.

We now derive the adaptation law $\dot{\hat{\theta}}$ by analyzing the time evolution of \eqref{eq:composite_barrier}.
First, the time derivative of $V$ along the trajectories of \eqref{eq:sys_dyn} is given by
\begin{align}
	\dot{V}(x) & = L_f V(x) + L_{F}V(x)\theta_* + L_G V(x)u \nonumber \\
	&= \dot V_{\hat \theta}(x,\hat{\theta}) -\phi (x) \tilde{\theta},
	\label{eq:h_dynamics_split2}
\end{align}
where $\phi\colon \SX \to \BBR^{1\times p}$ is the \textit{stability regressor} defined by
\begin{equation}
	\phi(x) \isdef L_{F}V(x),
\end{equation}
and $\dot V_{\hat \theta}\colon\SX\times\BBR^p\to\BBR$ is defined by
\begin{align}
	\dot V_{\hat \theta}(x,\hat{\theta}) \isdef L_f V(x) + \phi(x)\hat{\theta} + L_G V(x)u,\label{eq_dotV_hat}
\end{align}
which represents the time derivative of the CLF based on the current parameter estimate $\hat \theta$. 
Differentiating the composite energy \eqref{eq:composite_barrier} along the trajectories of \eqref{eq:sys_dyn} yields
\begin{equation}
	\dot{V}_\rmc(x, \hat{\theta}) = -\frac{\dot{h}(x)}{h(x)(1+h(x))} + \kappa \dot{V}(x) + \tilde{\theta}^\rmT \Gamma^{-1} \dot{\tilde{\theta}}.
	\label{eq_VCd}
\end{equation}
Since $\theta_*$ is constant, it follows that $\dot{\tilde{\theta}} = \dot{\hat{\theta}}$. Thus, substituting \eqref{eq:h_dynamics_split} and \eqref{eq:h_dynamics_split2} into \eqref{eq_VCd} yields
\begin{align}
	\dot{V}_\rmc(x, \hat{\theta}) &= -\frac{\dot h_{\hat \theta}(x,\hat\theta) - \psi (x) \tilde{\theta}}{h(x)(1+h(x))} + \kappa (\dot V_{\hat \theta}(x,\hat{\theta}) - \phi (x) \tilde{\theta}) + \tilde{\theta}^\rmT \Gamma^{-1} \dot{\hat{\theta}} \nonumber \\
	&= -\frac{\dot h_{\hat \theta}(x,\hat\theta)}{h(x)(1+h(x))} + \kappa \dot V_{\hat \theta}(x,\hat{\theta}) - \tilde{\theta}^\rmT \Gamma^{-1} \left( \kappa \Gamma\phi(x)^\rmT - \frac{\Gamma\psi(x)^\rmT}{h(x)(1+h(x))} - \dot{\hat{\theta}} \right),
	\label{eq:Vc_grouped}
\end{align}
where the terms in the parenthesis capture the sign-indefinite effect of parameter uncertainty on both safety and stability. We design $\dot{\hat{\theta}}$ to not only cancel these terms but also minimize the estimation error.

Define the instantaneous estimation error $e\colon \SX\times \BBR^p \to \BBR^n$ by
\begin{equation}
	e(x, \hat{\theta}) \isdef \dot{x} - \left(f(x) + F(x) \hat{\theta} + G(x) u\right),
	\label{eq_err}
\end{equation}
which represents the discrepancy between the measured or estimated state derivative and the estimated state derivative using the current parameter estimate $\hat{\theta}$. Substituting \eqref{eq:sys_dyn} into \eqref{eq_err} yields
\begin{equation}
	e(x, \hat{\theta}) = -F(x) \tilde{\theta}.
	\label{eq_err_sub}
\end{equation}
To minimize the magnitude of $e$, we incorporate a gradient descent term based on the instantaneous es\-ti\-ma\-tion-error cost $J\colon \SX\times \BBR^p\to [0,\infty)$, defined as $J(x, \hat{\theta}) = \frac{1}{2}\|e(x, \hat{\theta})\|^2$. Treating the measured state derivative $\dot{x}$ as independent of $\hat{\theta}$, the gradient of $J$ is derived from \eqref{eq_err_sub} as follows:
\begin{align}
	\frac{\partial J(x, \hat{\theta})}{\partial \hat\theta} &= \left( \frac{\partial e(x, \hat{\theta})}{\partial \hat{\theta}} \right)^\rmT \frac{\partial J(x, \hat{\theta})}{\partial e(x, \hat{\theta})} \nonumber \\
	&= \left( \frac{\partial}{\partial \hat{\theta}} \left( -F(x) \tilde{\theta} \right) \right)^\rmT e(x, \hat{\theta}) \nonumber \\
	&= -F(x)^\rmT e(x, \hat{\theta}) .
	\label{DJJ}
\end{align}
To eliminate the sign-indefinite terms inside the parentheses of \eqref{eq:Vc_grouped}, we design an adaptation law that explicitly cancels the safety and stability regressors while injecting a gradient update driven by the estimation error \eqref{DJJ}.
Furthermore, to ensure Assumption \ref{assum:theta} is satisfied, we apply a projection operator. Specifically, the parameter update law is given by
\begin{equation}
	\dot{\hat{\theta}} (x, \hat{\theta})= \SP_{\Theta} \left( \Gamma \left( \kappa \phi(x)^\rmT - \frac{\psi(x)^\rmT}{h(x)(1+h(x))} + \gamma F(x)^\rmT e(x, \hat{\theta}) \right),\hat{\theta} \right),
	\label{eq:adaptation_law}
\end{equation}
where $\hat\theta(0)\in\Theta$, $\gamma \ge 0$ is the adaptation gain, and $\SP_{\Theta}\colon \BBR^p\times\BBR^p\to\BBR^p$ is a smooth projection operator that confines $\hat{\theta}$ to the known set $\Theta$, and is defined as
\begin{equation}
	\SP_{\Theta}(\tau, \hat{\theta}) \isdef
	\begin{cases}
		\tau, & \|\hat{\theta}\| < \theta_{\max} \text{ or } \hat{\theta}^\rmT \tau \le 0, \\
		\tau - \Gamma \dfrac{ \hat{\theta}^\rmT \tau}{\hat{\theta}^\rmT \Gamma \hat{\theta}}\hat{\theta}, & \text{otherwise},
	\end{cases}
	\label{eq:proj_def}
\end{equation}
where $\tau\in{\BBR^p}.$
The projection operator $\SP_{\Theta}(\tau, \hat{\theta})$ modifies the update direction $\tau$ only when the estimate reaches the boundary of $\Theta$ (i.e., $\|\hat{\theta}\| = \theta_{\max}$) and the update direction $\tau$ points outside the set $\Theta$ (i.e., $\hat{\theta}^\rmT \tau >0$). To satisfy the Lyapunov dissipation property with a general positive-definite gain matrix $\Gamma \succ 0$, the projection is defined as an orthogonal projection onto the tangent plane of the boundary with respect to the $\Gamma^{-1}$-weighted Euclidean metric used in \eqref{eq:composite_barrier}.
As shown in Lemma \ref{lem:proj_property} in the next section, for all $\tau\in{\BBR^p}$, $\SP_{\Theta}$ satisfies the projection property
\begin{equation}
	\tilde{\theta}^\rmT \Gamma^{-1} (\SP_{\Theta}(\tau) - \tau) \le 0.\label{eq_proj_property0}
\end{equation}
Substituting \eqref{eq:adaptation_law} back into \eqref{eq:Vc_grouped} and employing \eqref{eq_err_sub} and \eqref{eq_proj_property0} implies that
\begin{equation}
	\dot{V}_\rmc(x, \hat{\theta}) \leq -\frac{\dot{h}_{\hat \theta}(x, \hat{\theta})}{h(x)(1+h(x))} + \kappa \dot{V}_{\hat{\theta}}(x, \hat{\theta}) - \gamma \|e(x, \hat{\theta})\|^2,
	\label{eq:Vc_final}
\end{equation}
which highlights the core advantage of the formulation: the CaCBF evolution depends \textit{only} on known or estimated quantities, while the estimation error provides a nonpositive contribution.
Leveraging the decomposition in \eqref{eq:Vc_final}, we synthesize a controller that strictly bounds $V_\rmc$ by constraining the evolution of the estimated safety and stability functions. For instance, imposing the sufficient conditions
\begin{equation}
	\dot{h}_{\hat{\theta}}(x,\hat{\theta}) \geq -\alpha(h(x)) \quad \text{and} \quad \dot{V}_{\hat{\theta}}(x,\hat{\theta}) \leq -\lambda V(x) + \delta,
	\label{eq:sufficient_conditions}
\end{equation}
where $\lambda,\delta>0$ and $\alpha$ is a class $\mathcal{K}$ function, transforms \eqref{eq:Vc_final} into the dissipation inequality
\begin{equation}
	\dot{V}_\rmc(x, \hat{\theta}) \leq \frac{\alpha(h(x))}{h(x)(1+h(x))} - \kappa \lambda V(x) - \gamma \|e(x, \hat{\theta})\|^2 + \kappa \delta,
	\label{eq:dissipation_inequality}
\end{equation}
where, as proven in Theorem \ref{thm:main_results0} in Section \ref{sec:analysis}, the upper bound ensures that $V_\rmc$ is bounded, thereby guaranteeing safety.

To strictly enforce the conditions in \eqref{eq:sufficient_conditions} while minimizing control effort and determining the optimal relaxation $\delta$, we formulate the control synthesis as a convex optimization problem that integrates the parameter estimates directly into the constraints.
For all $x \in \SX$ and all $\hat{\theta} \in \BBR^p$ evolving according to the adaptation law \eqref{eq:adaptation_law}, the control input $u_*\colon\SX\times\BBR^p$ is the unique solution to the CLF-CBF-QP given by
\begin{align}
	\big({u}_*(x, \hat{\theta}), \delta_*(x, \hat{\theta})\big) \isdef \operatorname*{argmin}_{(u,\delta) \in \BBR^m\times [0,\infty)} \quad & \frac{1}{2} u^\rmT R(x) u + \rho \delta^2, \label{eq:qp_cost_unified} \\
	\text{s.t.} \quad & L_f h(x) + \psi(x) \hat{\theta} + L_G h(x) u \geq -\alpha(h(x)), \label{eq:const_cbf} \\
	& L_f V(x) + \phi(x) \hat{\theta} + L_G V(x) u \leq -\lambda V(x) + \delta, \label{eq:const_clf}
\end{align}
where $R\colon\SX\to\BBR^{m\times m}$ is positive definite, $\delta\ge0$ is the relaxation variable for the stability constraint, $\rho>0$ is the stability-relaxation penalty, and $\lambda>0$ is the nominal convergence rate of the CLF.

The implementation of the proposed feedback control law is summarized in Algorithm \ref{alg:cacbf}, which outlines the computation of the parameter adaptation rate and the optimization-based control input.

\begin{algorithm}[h!]
	\caption{CaCBF Algorithm}
	\label{alg:cacbf}
	\begin{algorithmic}[1]
		\Statex {Design Parameters:} Gains $\Gamma \succ 0, \gamma > 0, \kappa > 0$; Weights $R \succ 0, \rho > 0$; Decay Rate $\lambda > 0$
		\Statex {Design Functions:} Barrier $h$, Lyapunov $V$, Class $\mathcal{K}$ $\alpha$
		\Statex {Sensor Data:} System State $x$, System Response (measured or estimated) $\dot{x}$
		\Statex {Internal State:} Parameter Estimate $\hat{\theta}$
		\Statex {Control \& Adaptation:} Optimal Control $u_*$, Parameter Update $\dot{\hat{\theta}}$
		\Statex \hrule
		\Statex \vspace{0.01em} \texttt{\textcolor{CadetBlue}{\# Compute Values, Regressors \& Estimation Error}}
		\State $\psi(x) \leftarrow L_F h(x)$ 
		\State $ \phi(x) \leftarrow L_F V(x)$
		\State $e(x, \hat{\theta}) \leftarrow \dot{x} - (f(x) + F(x)\hat{\theta} + G(x)u_*)$
		
		\Statex \vspace{0.5em} \texttt{\textcolor{CadetBlue}{\#  Evaluate Parameter Adaptation Law}}
		\State $\tau \leftarrow \Gamma \left[ \kappa \phi(x)^\rmT - \frac{\psi(x)^\rmT}{h(x)(1+h(x))} + \gamma F(x)^\rmT e(x, \hat{\theta}) \right]$
		
		\Statex \vspace{0.5em} \texttt{\textcolor{CadetBlue}{\#  Apply Projection}}
		\State $\dot{\hat{\theta}} \leftarrow \mathcal{P}_{\Theta}(\tau, \hat{\theta})$
		
		\Statex \vspace{0.5em} \texttt{\textcolor{CadetBlue}{\#  Solve Adaptive CLF-CBF-QP}}
		\State $(u_*, \delta_*) \leftarrow \operatorname*{argmin}_{(u, \delta)} \frac{1}{2} u^\rmT R(x) u + \rho \delta^2$
		\State \quad \text{s.t. } $L_f h(x) + \psi(x) \hat{\theta} + L_G h(x) u \geq -\alpha(h(x))$
		\State \quad \quad \ \ $L_f V(x) + \phi(x) \hat{\theta} + L_G V(x) u \leq -\lambda V(x) + \delta$
		
		\Statex \vspace{0.5em} \texttt{\textcolor{CadetBlue}{\#  Update System}}
		\State Apply $u_*$ and integrate $\dot{\hat{\theta}}$
	\end{algorithmic}
\end{algorithm}
\section{Closed-Loop Analysis}
\label{sec:analysis}

This section analyzes the closed-loop system formed by the plant dynamics \eqref{eq:sys_dyn}, the adaptation law \eqref{eq:adaptation_law}, and the adaptive controller \eqref{eq:qp_cost_unified}--\eqref{eq:const_clf}. We prove that this feedback interconnection guarantees three essential properties: the feasibility of the optimization, the forward invariance of the safe set, and the boundedness of all system signals.

To ensure the design is well-posed, we impose the following structural assumptions.

\begin{enumerate}[label=(A\arabic*),left=10pt,start=2]
	\item \label{assum:C_nonempty_reldegree_1} The safe set $\mathcal{C}$ is non-empty, and $h$ has relative degree 1 with respect to \eqref{eq:sys_dyn}; that is, for all $x \in \mathcal{C}$, $L_G h(x) \neq 0$.
	\item \label{assum:gains}
	$R$ is continuous and uniformly positive definite; that is, there exist $\underline{r}, \overline{r} > 0$ such that, for all $x\in\SC$, $\underline{r} I_m \preceq R(x) \preceq \overline{r} I_m$.
	\item \label{assum:h_is_CBF_and_h(0)_greater_0} $h$ is a CBF for the system \eqref{eq:sys_dyn} and the safe set $\mathcal{C}$, and $h(0)>0$.
	\item \label{assum:clf}
	$V$ is a CLF for the system \eqref{eq:sys_dyn}.
\end{enumerate}

Assumption \ref{assum:C_nonempty_reldegree_1} grants the control input direct authority over the safety barrier. By requiring the relative degree to be 1, we guarantee that the actuator can instantaneously influence the time derivative $\dot{h}$ to steer the system away from the boundary. If this condition fails (i.e., if $L_G h(x) = 0$), the safety constraint becomes locally independent of $u$, potentially rendering the optimization infeasible. We address this specific challenge in the planar drone scenario of Example \ref{ex_Drone_Gate}, where the position-based constraints exhibit relative degree 2. In such cases, we employ backstepping techniques \cite{nguyen2016exponential,xiao2021high} to construct extended barrier functions that recover the required relative degree 1 property, thereby restoring direct control authority over the safety condition.
Assumption \ref{assum:gains} ensures the optimization problem remains strictly convex. By bounding the eigenvalues of $R$ away from zero, we guarantee a unique optimal solution and prevent the control effort from becoming unbounded. Finally, Assumptions \ref{assum:h_is_CBF_and_h(0)_greater_0} and \ref{assum:clf} align the safety and stability objectives. Specifically, the condition $h(0) > 0$ implies the target equilibrium lies strictly within the safe set, ensuring the CLF can drive the state to the origin without conflicting with the safety barrier.

The following result, which follows from standard invariance arguments,  links the algebraic constraint in Definition \ref{def:cbf} to the physical safety of the system. See \cite{ames2019cbf} for a proof.

\begin{theorem}[]
	\label{thm:safety}
	Consider the system \eqref{eq:sys_dyn} and the safe set $\mathcal{C}$.
	Assume that  \ref{assum:C_nonempty_reldegree_1} is satisfied, and let $h: \SX\to \BBR$ be a CBF for \eqref{eq:sys_dyn} and $\mathcal{C}$. In addition, assume that, for all $t \geq 0$, $u(t) \in \SU_{\rm cbf}(x(t))$. Then, $\mathcal{C}$ is forward invariant with respect to \eqref{eq:sys_dyn}.
\end{theorem}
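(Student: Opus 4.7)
The plan is to reduce the state-space invariance question to a scalar differential inequality on $\eta(t)\triangleq h(s(t,x(0)))$ and then invoke a standard comparison argument. Because $h$ and $s(\cdot,x(0))$ are continuously differentiable, $\eta$ is absolutely continuous, and the chain rule together with the hypothesis $u(t)\in\SU_{\rm cbf}(x(t))$ gives the differential inequality
\begin{equation*}
\dot\eta(t)=\dot h(x(t),u(t))\ge -\alpha(\eta(t)),\qquad t\ge 0.
\end{equation*}
Forward invariance of $\mathcal{C}$ is equivalent to the assertion that $\eta(0)\ge 0$ implies $\eta(t)\ge 0$ for all $t$ in the maximal interval of existence of $s(\cdot,x(0))$.

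Next I would introduce the comparison ODE $\dot\beta(t)=-\alpha(\beta(t))$, $\beta(0)=\eta(0)$. Because $\alpha$ is locally Lipschitz with $\alpha(0)=0$, this scalar ODE has a unique local solution, and the constant function $\beta\equiv 0$ is itself a solution when the initial condition vanishes. Uniqueness of solutions therefore prevents any trajectory starting at $\beta(0)\ge 0$ from crossing zero: if $\beta(0)=0$ then $\beta\equiv 0$, and if $\beta(0)>0$ then $\beta$ remains strictly positive since the only way to reach zero would contradict uniqueness at the crossing time. Consequently $\beta(t)\ge 0$ for all $t\ge 0$.

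Applying the Comparison Lemma to $\dot\eta\ge -\alpha(\eta)$ versus $\dot\beta=-\alpha(\beta)$ with $\eta(0)=\beta(0)$ yields $\eta(t)\ge\beta(t)\ge 0$ for all $t$ in the interval on which both are defined. This shows $h(s(t,x(0)))\ge 0$, i.e.\ $s(t,x(0))\in\mathcal{C}$, whenever the trajectory exists. Because $\mathcal{C}\subset\SX$ and the vector field in \eqref{eq:sys_dyn} is continuous, the argument can be extended to the full positive time axis in the usual way (no finite-time escape can occur while $\eta$ remains nonnegative and the trajectory stays in $\mathcal{C}$, provided $\mathcal{C}$ is where solutions are defined; otherwise the conclusion holds on the maximal interval, which is the standard Nagumo-style statement).

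The only delicate step is the no-crossing property of the comparison ODE at the origin, and this is precisely where the local Lipschitz hypothesis on $\alpha\in\mathcal{K}_\rme$ is used: without it, nonuniqueness at $\beta=0$ could allow a solution to leave the nonnegative axis. Everything else is routine differentiation, application of the CBF inequality \eqref{eq_hsafe} via the assumption $u(t)\in\SU_{\rm cbf}(x(t))$, and the Comparison Lemma.
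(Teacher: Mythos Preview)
Your argument is correct and is precisely the standard comparison-lemma proof; the paper does not supply its own proof but simply cites \cite{ames2019cbf}, whose argument is essentially the one you wrote. One minor observation: the paper remarks that \ref{assum:C_nonempty_reldegree_1} ensures $\nabla h\neq 0$ on $\partial\mathcal{C}$ as a ``geometric prerequisite for the invariance proof,'' hinting at a Nagumo-style argument, whereas your comparison route never actually invokes that regularity---the differential inequality $\dot\eta\ge -\alpha(\eta)$ and local Lipschitzness of $\alpha$ suffice on their own, so your proof is in fact slightly leaner in its hypotheses than the paper's surrounding commentary suggests.
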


Assumption \ref{assum:C_nonempty_reldegree_1} guarantees that the gradient $\nabla h$ never vanishes on the boundary of the safe set. This regularity condition ensures that $\bd \mathcal{C}$ forms a smooth manifold with a uniquely defined normal vector, a geometric prerequisite for the invariance proof.

Having established the conditions for safety, the next theorem confirms that the control law remains feasible and continuous.

\begin{theorem}[]
	\label{thm:feasibility}
	Consider the CLF-CBF-QP defined in \eqref{eq:qp_cost_unified}--\eqref{eq:const_clf}. Assume that \ref{assum:C_nonempty_reldegree_1} and \ref{assum:gains} are satisfied. Then, the following statements hold:
	\begin{enumerate}
		\item\label{Thm2_p1} For all $x \in \Int\SC$ and all $\hat{\theta} \in \BBR^p$, the CLF-CBF-QP is strictly feasible and has a unique global minimizer.
		\item\label{Thm2_p2} The optimal control $u_*$ and  relaxation  $\delta_*$ are locally Lipschitz continuous on $\Int\SC \times \BBR^p$.
	\end{enumerate}
\end{theorem}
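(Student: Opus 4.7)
The two claims are proved independently and both reduce to standard convex-programming facts, but packaging them correctly with the parameter $\hat\theta$ carried through the data requires a little care.

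For Part \ref{Thm2_p1}, the plan is to first exhibit a strictly feasible point and then invoke strict convexity. Given $x\in\Int\SC$ and $\hat\theta\in\BBR^p$, Assumption \ref{assum:C_nonempty_reldegree_1} gives $L_G h(x)\neq 0$, so the choice $u_0\isdef c\,L_G h(x)^\rmT$ makes the CBF inner product $L_G h(x)u_0 = c\|L_G h(x)\|^2$ as large as desired; taking
\[
c>\frac{-L_f h(x)-\psi(x)\hat\theta-\alpha(h(x))}{\|L_G h(x)\|^2}
\]
forces the CBF constraint \eqref{eq:const_cbf} to hold strictly. With $u_0$ fixed, the CLF constraint \eqref{eq:const_clf} is a single affine inequality in $\delta$, so any $\delta_0>\max\{0,\,L_f V(x)+\phi(x)\hat\theta+L_G V(x)u_0+\lambda V(x)\}$ works. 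Hence $(u_0,\delta_0)$ lies in the relative interior of the feasible set, so Slater's condition holds. Uniqueness then comes for free: by Assumption \ref{assum:gains}, $R(x)\succeq\underline r\,I_m$, and $\rho>0$, so the objective $\tfrac12 u^\rmT R(x)u+\rho\delta^2$ is strongly convex in $(u,\delta)$; since the feasible set is a nonempty closed convex polyhedron (intersection of two affine half-spaces with $\{\delta\ge0\}$), the minimizer exists and is unique.

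For Part \ref{Thm2_p2}, the plan is to view the QP as a parametric program in $(x,\hat\theta)$ and invoke a standard sensitivity result for strongly convex QPs with Lipschitz data (e.g., Hager's theorem on parametric quadratic programs, or equivalently the implicit function theorem applied to the KKT system once LICQ/MFCQ is verified). The QP data consist of the matrix $R(x)$ and the coefficient vectors $L_G h(x)^\rmT$, $L_G V(x)^\rmT$, and scalars $L_f h(x)+\psi(x)\hat\theta+\alpha(h(x))$, $L_f V(x)+\phi(x)\hat\theta+\lambda V(x)$; all of these are compositions of continuously differentiable functions of $x$ and linear functions of $\hat\theta$, hence locally Lipschitz on $\Int\SC\times\BBR^p$. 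Strict feasibility from Part \ref{Thm2_p1} is preserved locally by continuity, and strong convexity is uniform on compact sets because $R(x)\succeq\underline r\,I_m$. Together these imply MFCQ holds at every $(x,\hat\theta)$ in a neighborhood of any given point, and the unique minimizer depends locally Lipschitz continuously on the parameters; the relaxation $\delta_*$ inherits the same regularity as a coordinate of the primal solution.

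\textbf{Main obstacle.} Part \ref{Thm2_p1} is essentially immediate once one notices that $L_Gh(x)\neq0$ removes the only structural barrier to strict feasibility, and that the unbounded relaxation $\delta$ trivializes the CLF constraint. The genuine technical step is Part \ref{Thm2_p2}: one must ensure the sensitivity theorem applies uniformly near the boundary $\bd\SC$, where $h(x)$ is small but still positive, and that Lipschitzness is only claimed locally (so one does not need bounds that blow up as $x\to\bd\SC$). The cleanest route is to restrict to an arbitrary compact $K\subset\Int\SC\times\BBR^p$, note that on $K$ the QP data and the strong-convexity constant are uniformly bounded away from their degenerate values, and then apply the parametric-QP Lipschitz estimate on $K$; since $K$ is arbitrary, local Lipschitz continuity on all of $\Int\SC\times\BBR^p$ follows.
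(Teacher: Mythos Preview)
Your proof of Part \ref{Thm2_p1} is correct and essentially identical to the paper's: exhibit a strictly feasible point by first using $L_Gh(x)\neq0$ to satisfy \eqref{eq:const_cbf} strictly, then inflate $\delta$ to handle \eqref{eq:const_clf}, and conclude uniqueness from strong convexity of the objective. The paper's construction is slightly less explicit (it argues that $a_1(x)z<b_1$ defines a nonempty open half-space rather than writing down $u_0=c\,L_Gh(x)^\rmT$), but the content is the same.

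For Part \ref{Thm2_p2} you take a genuinely different route. The paper does \emph{not} argue via Slater/MFCQ; instead it verifies the \emph{linear independence constraint qualification} directly, exploiting a structural feature you do not mention: in the augmented variable $z=[u^\rmT~\delta]^\rmT$, the CBF constraint gradient is $[-L_Gh(x)~~0]$ while the CLF constraint gradient is $[L_GV(x)~~{-1}]$. Because the relaxation $\delta$ enters only the second constraint, the last coordinate forces any linear combination to have zero coefficient on the CLF row, and then $L_Gh(x)\neq0$ kills the remaining coefficient. With LICQ in hand, the paper invokes a sensitivity theorem (their reference~\cite{morris2013sufficient}) that delivers local Lipschitz continuity of $z_*$. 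Your approach instead recycles the strict feasibility from Part \ref{Thm2_p1} to get MFCQ and then appeals to a Hager-type parametric-QP result. Both routes are valid; the paper's LICQ argument is a clean two-line linear-algebra check that avoids needing the stronger sensitivity theorem, while your MFCQ route has the virtue of following immediately from what was already proved and is more robust if additional constraints (e.g., the box constraint $\delta\ge0$) are folded in. One caution: your parenthetical about the implicit function theorem on the KKT system is loose, since IFT typically requires LICQ together with strict complementarity, which MFCQ alone does not give; if you pursue your route, commit to the Hager/Robinson-style Lipschitz stability result for strongly convex QPs rather than IFT.
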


\begin{proof}
	Let $z \isdef [u^\rmT~ \delta]^\rmT \in \BBR^{m+1}$. We write the CLF-CBF-QP as
	\begin{align}
		z_*(x, \hat{\theta}) \isdef \operatorname*{argmin}_{z \in \BBR^{m+1}} \quad & \frac{1}{2} z^\rmT \mathcal{Q}(x) z, \label{eq:qp_standard} \\
		\text{s.t.} \quad & A(x) z \leq b(x, \hat{\theta}), 
	\end{align}
	where 
	\begin{equation}
		\mathcal{Q}(x) \isdef \begin{bmatrix} R(x) & 0 \\ 0 & \rho \end{bmatrix}\in\BBR^{(m+1)\times(m+1)}, \quad
		A(x) \isdef \begin{bmatrix} a_1(x) \\ a_2(x) \end{bmatrix}\in\BBR^{2\times (m+1)}, \quad
		b(x, \hat{\theta}) \isdef \begin{bmatrix} b_1(x, \hat{\theta}) \\ b_2(x, \hat{\theta}) \end{bmatrix}\in\BBR^{2},
	\end{equation}
	and 
	\begin{gather}
		a_1(x)\isdef \matl-L_G h(x) & 0\matr\in\BBR^{1\times(m+1)}, \quad a_2(x)\isdef \matl L_G V(x) & -1\matr\in\BBR^{1\times(m+1)}, \\
		b_1(x, \hat{\theta})\isdef L_f h(x) + L_{F} h(x) \hat{\theta} + \alpha(h(x))\in\BBR, \quad b_2(x, \hat{\theta})\isdef  -L_f V(x) - L_{F} V(x) \hat{\theta} - \lambda V(x)\in\BBR.
		\label{eq_b2}
	\end{gather}
	
	To prove \ref{Thm2_p1}, let $x \in \Int\SC$ and $\hat{\theta} \in \BBR^p$. 
	First, consider the safety constraint $a_1(x)z < b_1(x,\hat\theta)$. Since \ref{assum:C_nonempty_reldegree_1} is satisfied, it follows that $a_1(x) \ne0$, and thus the strict inequality $a_1(x) z < b_1(x,\hat\theta)$ defines a non-empty, open half-space in $\BBR^m$.
	Therefore, there exists $u_{*}\in\BBR^m$ such that, for all $\delta_0\ge0$, $z_{0}\isdef[ u_{*}^\rmT ~ \delta_0 ]^\rmT$ satisfies the strict safety constraint $a_1(x) z_{0} < b_1(x,\hat\theta)$.
	Now, let $\delta_1 \isdef L_G V(x) u_{*} - b_2(x,\hat\theta)$ and  $z_2\isdef[ u_{*}^\rmT~\delta_2 ]^\rmT $, where $\delta_2>\max(0,\delta_1)$. It thus follows that $a_2(x) z_2 < b_2(x,\hat\theta)$. Therefore, since $z_{2}$ satisfies both constraints, the CLF-CBF-QP problem is strictly feasible.
	Moreover, since $\rho > 0$ and \ref{assum:gains} is satisfied, it follows that $\mathcal{Q}(x) \succ 0$, ensuring a unique global minimizer $z_*$.
	
	To prove \ref{Thm2_p2}, we use the sensitivity theorem in \cite[Theorem 1]{morris2013sufficient}, which requires the linear independence constraint qualification. Note that $\mathcal{Q}$, $A$, and $b$ are composed of smooth vector fields $f, F, G$ and continuously differentiable functions $h$ and $V$. Since continuously differentiable functions are locally Lipschitz on compact sets, it follows that $\mathcal{Q}$, $A$, and $b$ are locally Lipschitz on compact sets. In addition, $R \succ 0$ and $\rho > 0$ imply strict convexity.
	Next, let $x\in\SC$ and $\hat\theta\in\BBR^p$. We prove that the gradients of the active constraints are linearly independent. First, consider the case where exactly one constraint is active. Since \ref{assum:C_nonempty_reldegree_1} is satisfied, it follows that $a_1(x) \neq 0$. Since, in addition, $a_2(x) \neq 0$, it follows that the active constraint is linearly independent.  
	Second, consider the case where both constraints are active. Let $c_1,c_2\in\BBR$, and consider the linear combination $c_1 a_1(x)^\rmT + c_2 a_2(x)^\rmT = 0$, which implies
	\begin{equation}
		c_1 \begin{bmatrix} -L_G h(x)^\rmT \\ 0 \end{bmatrix} + c_2 \begin{bmatrix} L_G V(x)^\rmT \\ -1 \end{bmatrix} = \begin{bmatrix} 0 \\ 0 \end{bmatrix}.\label{eq_const4}
	\end{equation}
	Since \ref{assum:C_nonempty_reldegree_1} is satisfied, \eqref{eq_const4} implies $c_1=c_2=0$.
	Thus, the gradients of the active constraints are linearly independent. Since all conditions of \cite[Theorem 1]{morris2013sufficient} are satisfied, $z_*$ is locally Lipschitz continuous, which confirms \ref{Thm2_p2} because $u_*$ and $\delta_*$ are linear projections of $z_*$.
\end{proof}

We now confirm that the projection operator \eqref{eq:proj_def} enforces parameter boundedness without corrupting the adaptation direction. This property ensures that the projection mechanism does not counteract the adaptation process.

\begin{lemma}[]
	\label{lem:proj_property}
	Assume that \ref{assum:theta} is satisfied. Then, for all $\tau \in \BBR^p$ and all $\hat{\theta} \in \Theta$, the projection operator $\SP_{\Theta}$ defined in \eqref{eq:proj_def} satisfies
	\begin{equation}
		\tilde{\theta}^\rmT \Gamma^{-1} \big(\SP_{\Theta}(\tau, \hat{\theta}) - \tau\big) \le 0.
		\label{eq_proj_property}
	\end{equation}
\end{lemma}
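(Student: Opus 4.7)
The plan is to argue by case analysis on the two branches of the projection operator defined in \eqref{eq:proj_def}. In the first branch, where either $\|\hat{\theta}\| < \theta_{\max}$ or $\hat{\theta}^\rmT \tau \le 0$, we have $\SP_{\Theta}(\tau,\hat{\theta}) = \tau$, and therefore $\SP_{\Theta}(\tau,\hat{\theta}) - \tau = 0$, so \eqref{eq_proj_property} holds with equality. This case requires no further work.

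For the second branch, where $\|\hat{\theta}\| = \theta_{\max}$ and $\hat{\theta}^\rmT \tau > 0$, I would substitute the explicit formula $\SP_{\Theta}(\tau,\hat{\theta}) - \tau = -\Gamma \hat{\theta} (\hat{\theta}^\rmT \tau)/(\hat{\theta}^\rmT \Gamma \hat{\theta})$ directly into the left-hand side of \eqref{eq_proj_property}. The factor $\Gamma^{-1}$ cancels with the leading $\Gamma$, reducing the quantity of interest to
\[
\tilde{\theta}^\rmT \Gamma^{-1}\bigl(\SP_{\Theta}(\tau,\hat{\theta}) - \tau\bigr) = -\frac{\hat{\theta}^\rmT \tau}{\hat{\theta}^\rmT \Gamma \hat{\theta}}\,\tilde{\theta}^\rmT \hat{\theta}.
\]
Since $\Gamma \succ 0$ and $\hat{\theta} \neq 0$ (because $\|\hat{\theta}\|=\theta_{\max}>0$), the denominator $\hat{\theta}^\rmT \Gamma \hat{\theta}$ is strictly positive, and the numerator $\hat{\theta}^\rmT \tau$ is strictly positive by the case hypothesis. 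The sign of the whole expression therefore depends entirely on the sign of $\tilde{\theta}^\rmT \hat{\theta}$.

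The remaining step is to show $\tilde{\theta}^\rmT \hat{\theta} \ge 0$ whenever $\|\hat{\theta}\| = \theta_{\max}$. Expanding $\tilde{\theta} = \hat{\theta} - \theta_*$ gives $\tilde{\theta}^\rmT \hat{\theta} = \|\hat{\theta}\|^2 - \theta_*^\rmT \hat{\theta}$. By the Cauchy--Schwarz inequality and Assumption \ref{assum:theta}, $\theta_*^\rmT \hat{\theta} \le \|\theta_*\|\,\|\hat{\theta}\| \le \theta_{\max}\,\|\hat{\theta}\| = \|\hat{\theta}\|^2$, so $\tilde{\theta}^\rmT \hat{\theta} \ge 0$, as desired. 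Combining the two factors completes the proof of \eqref{eq_proj_property} in the second branch.

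I do not expect a serious obstacle here; the result is essentially a geometric statement that $\hat{\theta}$ is an outward normal to the ball $\Theta$ at its boundary, so the projection removes only an outward-pointing component. The only subtlety worth being careful about is that the weighted inner product $\Gamma^{-1}$ in \eqref{eq:composite_barrier} dictates the specific form of the projection in \eqref{eq:proj_def}: the cancellation $\Gamma^{-1}\Gamma = I_p$ in the second case is exactly what makes the final expression reduce to the unweighted inner product $\tilde{\theta}^\rmT \hat{\theta}$, which is the quantity to which Assumption \ref{assum:theta} applies via Cauchy--Schwarz.
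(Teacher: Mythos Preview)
Your proposal is correct and follows essentially the same approach as the paper's proof: a two-case analysis on the branches of \eqref{eq:proj_def}, with the active case reduced via the cancellation $\Gamma^{-1}\Gamma = I_p$ to the sign of $\tilde{\theta}^\rmT\hat{\theta}$, which is then shown nonnegative by Cauchy--Schwarz and Assumption~\ref{assum:theta}. Your additional remark on why the $\Gamma$-weighted projection is needed for this cancellation is a nice piece of intuition not spelled out in the paper.
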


\begin{proof}
	Since \ref{assum:theta} is satisfied, it follows that $\Theta$ defined by \eqref{eq_Theta} exists. Let $\tau \in \BBR^p$ and $\hat{\theta} \in \Theta$. First, consider the case where the projection is inactive, that is, $\|\hat{\theta}\| < \theta_{\max}$ or $\hat{\theta}^\rmT \tau \le 0$. In this case, \eqref{eq:proj_def} implies that $\SP_{\Theta}(\tau,\hat{\theta}) = \tau$, which confirms \eqref{eq_proj_property}.
	
	Next, consider the case where the projection is active, that is, $\|\hat{\theta}\| = \theta_{\max}$ and $\hat{\theta}^\rmT \tau > 0$. It thus follows from \eqref{eq:proj_def} that
	\begin{align}
		\tilde{\theta}^\rmT \Gamma^{-1} (\SP_{\Theta}(\tau, \hat{\theta}) - \tau) &= \tilde{\theta}^\rmT \Gamma^{-1} \Bigg( - \Gamma \frac{ \hat{\theta}^\rmT \tau}{\hat{\theta}^\rmT \Gamma \hat{\theta}}\hat{\theta} \Bigg) \nonumber \\
		&= - \frac{\big(\tilde{\theta}^\rmT \hat{\theta}\big) \big(\hat{\theta}^\rmT \tau\big)}{\hat{\theta}^\rmT \Gamma \hat{\theta}}.
		\label{eq:proof_step1}
	\end{align}
	Since $\Gamma \succ 0$ and $\|\hat{\theta}\| = \theta_{\max} > 0$, it follows that $\hat{\theta}^\rmT \Gamma \hat{\theta} > 0$.
	Since, in addition, $\hat{\theta}^\rmT \tau > 0$, it follows from \eqref{eq:proof_step1} that 
	\begin{equation}
		\sgn \Big(\tilde{\theta}^\rmT \Gamma^{-1} \big(\SP_{\Theta}(\tau, \hat{\theta}) - \tau\big)\Big)= -\sgn\big( \tilde{\theta}^\rmT \hat{\theta}\big).\label{eq_dhahd}
	\end{equation}
	Note that since \ref{assum:theta} is satisfied and $\|\hat{\theta}\| = \theta_{\max}$, the Cauchy-Schwarz inequality implies that  
	\begin{align}
		\tilde{\theta}^\rmT \hat{\theta} &= \big(\hat{\theta} - \theta_*\big)^\rmT \hat{\theta}\nn\\
		& = \|\hat{\theta}\|^2 - \theta_*^\rmT \hat{\theta}\nn\\
		&=\theta_{\max}^2 - \theta_*^\rmT \hat{\theta}\nn\\
		&\ge \theta_{\max}^2 - \|\theta_*\| \|\hat{\theta}\|\nn\\
		& \ge 0,
	\end{align}
	which, combined with \eqref{eq_dhahd}, confirms \eqref{eq_proj_property}.
\end{proof}

The following lemma guarantees that the projection operator  confines the parameter estimates to the compact set $\Theta$.

\begin{lemma}[]
	\label{lem:param_boundedness}
	Assume that \ref{assum:theta} is satisfied.
	Consider the parameter update law given by \eqref{eq:adaptation_law}, where $\hat{\theta}(0) \in \Theta$. Then, for all $t \ge 0$, $\hat{\theta}(t) \in \Theta$. 
\end{lemma}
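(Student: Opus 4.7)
The plan is to prove forward invariance of the ball $\Theta$ under the projected adaptation law by analyzing the scalar Lyapunov-type quantity $W(t) \isdef \tfrac{1}{2}\|\hat\theta(t)\|^2$ and showing that $W$ cannot exceed $\tfrac{1}{2}\theta_{\max}^2$. Along any trajectory we have $\dot W(t) = \hat\theta(t)^\rmT \dot{\hat\theta}(t) = \hat\theta(t)^\rmT \SP_{\Theta}(\tau(t), \hat\theta(t))$, where $\tau(t)$ denotes the unprojected right-hand side inside the parentheses in \eqref{eq:adaptation_law}. The strategy is to split into the same two cases used in the definition \eqref{eq:proj_def} of the projection operator and to verify that $\dot W \le 0$ whenever $\hat\theta$ lies on the boundary $\bd\Theta$.

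First I would dispense with the interior case: when $\|\hat\theta\| < \theta_{\max}$, the projection is inactive, and since $\|\hat\theta\|$ is continuous, the trajectory cannot leave $\Theta$ without first hitting $\bd\Theta$. Second, consider $\|\hat\theta\| = \theta_{\max}$ with $\hat\theta^\rmT \tau \le 0$: then $\SP_{\Theta}(\tau,\hat\theta) = \tau$ by \eqref{eq:proj_def}, giving $\dot W = \hat\theta^\rmT \tau \le 0$ immediately. Third, and the key case, consider $\|\hat\theta\| = \theta_{\max}$ with $\hat\theta^\rmT \tau > 0$: substituting the second branch of \eqref{eq:proj_def} yields
\begin{equation*}
\hat\theta^\rmT \SP_{\Theta}(\tau,\hat\theta) = \hat\theta^\rmT \tau - \frac{\hat\theta^\rmT \tau}{\hat\theta^\rmT \Gamma \hat\theta}\,\hat\theta^\rmT \Gamma \hat\theta = 0,
\end{equation*}
where the denominator is strictly positive because $\Gamma \succ 0$ and $\|\hat\theta\| = \theta_{\max} > 0$. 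Hence $\dot W = 0$ on this branch of the boundary. In all boundary cases, therefore, $\dot W \le 0$.

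I would then close the argument by contradiction: suppose there exists $t^* > 0$ with $\|\hat\theta(t^*)\| > \theta_{\max}$. Since $\hat\theta(0) \in \Theta$ and $\hat\theta$ is continuous, by the intermediate value theorem there is a last time $t_1 \in [0,t^*)$ at which $\|\hat\theta(t_1)\| = \theta_{\max}$ and a $t_2 \in (t_1, t^*]$ such that $\|\hat\theta(t)\| > \theta_{\max}$ on $(t_1,t_2]$. Integrating $\dot W$ gives $W(t_2) - W(t_1) = \int_{t_1}^{t_2} \dot W(s)\,\rmd s$, and on $(t_1,t_2]$ the trajectory lies outside $\Theta$; applying the boundary-case analysis above at $s=t_1$ and using right-continuity of the piecewise-defined right-hand side in $s$ yields $W(t_2) \le W(t_1) = \tfrac{1}{2}\theta_{\max}^2$, contradicting $\|\hat\theta(t_2)\| > \theta_{\max}$.

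The main obstacle is the discontinuity of $\SP_{\Theta}$ across $\bd\Theta$, which makes the closed-loop vector field only piecewise continuous and rules out a naive ``take the derivative'' argument. I would address this by working with the scalar $W$ rather than $\hat\theta$ componentwise, using right-derivatives on the boundary and the contradiction argument above; the monotonicity $\dot W \le 0$ on $\bd\Theta$ suffices to rule out escape because any potential escape trajectory must cross $\bd\Theta$ at a point where the outward normal derivative of $W$ is nonpositive.
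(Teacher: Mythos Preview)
Your proposal is correct and follows essentially the same approach as the paper: both analyze $\tfrac{1}{2}\tfrac{\rmd}{\rmd t}\|\hat\theta\|^2$, split into the same cases dictated by the definition of $\SP_\Theta$, and show that this derivative is $\le 0$ (and in fact $=0$ in the active-projection branch) whenever $\|\hat\theta\|=\theta_{\max}$. The only cosmetic difference is that the paper concludes directly from ``$\|\hat\theta\|$ is non-increasing/constant on $\bd\Theta$,'' whereas you add an explicit contradiction argument; your closing would be cleaner if you observed that the same computation gives $\dot W\le 0$ whenever $\|\hat\theta\|\ge\theta_{\max}$ (not just on $\bd\Theta$), so the integral bound on $[t_1,t_2]$ follows without appealing to right-continuity.
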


\begin{proof}	
	Since \ref{assum:theta} is satisfied, it follows that $\Theta$ defined by \eqref{eq_Theta} exists. Let $x\in\SX$ and $\hat\theta\in{\BBR^p}$, and define $\tau(x,\hat\theta)\isdef \Gamma ( \kappa \phi(x)^\rmT - \frac{\psi(x)^\rmT}{h(x)(1+h(x))} + \gamma F(x)^\rmT e(x, \hat{\theta}) )$. For brevity, we omit the arguments and write $\tau$ hereafter. First, consider the case where the projection is inactive, that is, $\|\hat{\theta}\| < \theta_{\max} \text{ or } \hat{\theta}^\rmT \tau \le 0$. For the case where $\|\hat{\theta}\| < \theta_{\max}$, it follows that $\hat\theta\in\Int\Theta$. For the case where  $\hat{\theta}^\rmT \tau\le 0$, the update direction $\tau$ opposes or is orthogonal to $\hat{\theta}$, which is the outward normal vector to boundary of $\Theta$. Thus, it follows from \eqref{eq:adaptation_law} that the time derivative of the squared norm satisfies $\frac{1}{2}\frac{\rmd}{\rmd t}\|\hat{\theta}\|^2 = \hat{\theta}^\rmT \tau \le 0$. This implies that $\|\hat\theta\|$ is non-increasing, which confirms that $\hat{\theta}$ does not cross the boundary of $\Theta$.
	
	Next, consider the case where the estimate is on the boundary of $\Theta$ and the update tries to escape, that is, $\|\hat{\theta}\| = \theta_{\max}$ and $\hat{\theta}^\rmT \tau > 0$. In this case, the update law is given by the projection
	\begin{equation}
		\dot{\hat{\theta}} = \tau - \Gamma \frac{\hat{\theta}^\rmT \tau}{\hat{\theta}^\rmT \Gamma \hat{\theta}} \hat{\theta},\label{eq_hfsd}
	\end{equation}
	which is well defined because $\Gamma\succ0.$ Using \eqref{eq_hfsd} implies that 
	\begin{align}
		\frac{1}{2}\frac{\rmd}{\rmd t} \|\hat{\theta}\|^2 &= \hat{\theta}^\rmT \dot{\hat{\theta}} \nonumber \\
		&= \hat{\theta}^\rmT \left( \tau - \Gamma \frac{\hat{\theta}^\rmT \tau}{\hat{\theta}^\rmT \Gamma \hat{\theta}} \hat{\theta} \right) \nonumber \\
		&= \hat{\theta}^\rmT \tau - \left( \frac{\hat{\theta}^\rmT \tau}{\hat{\theta}^\rmT \Gamma \hat{\theta}} \right) \hat{\theta}^\rmT \Gamma \hat{\theta} \nonumber \\
		&= 0,
	\end{align}
	which implies that $\|\hat\theta\|$ is constant, and thus $\hat{\theta}$ does not cross the boundary of $\Theta$.
\end{proof}

With the parameter estimates strictly bounded, we turn to the safety of the system. The following theorem proves that the adaptive controller renders the safe set $\SC$ forward invariant with respect to the system \eqref{eq:sys_dyn}, ensuring the constraints are satisfied for all time.

\begin{theorem}[]
	\label{thm:main_results0}
	Consider the closed-loop system comprising the system \eqref{eq:sys_dyn}, the parameter update law \eqref{eq:adaptation_law}, and  the adaptive control law $u_*$ defined by \eqref{eq:qp_cost_unified}--\eqref{eq:const_clf}. Assume that \ref{assum:theta}--\ref{assum:clf} are satisfied, and let $x(0) \in \Int\mathcal{C}$ and $\hat{\theta}(0) \in \Theta$. Then, the safe set $\mathcal{C}$ is forward invariant with respect to \eqref{eq:sys_dyn}.
\end{theorem}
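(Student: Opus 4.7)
The plan is to use the composite function $V_\rmc$ from \eqref{eq:composite_barrier} as a barrier-Lyapunov certificate: since $V_\rmc(x,\hat\theta) \to +\infty$ as $h(x) \downarrow 0$, showing that $V_\rmc$ remains finite along the closed-loop trajectory forces $h(x(t)) > 0$, hence $x(t) \in \Int\SC \subset \SC$. I would begin by establishing local existence and uniqueness on a maximal interval $[0,T^*)$: the denominator $h(1+h)$ in the adaptation law \eqref{eq:adaptation_law} is bounded away from zero on compact subsets of $\Int\SC$, the control $u_*$ is locally Lipschitz by Theorem \ref{thm:feasibility}, and $\SP_\Theta$ preserves regularity, so Picard--Lindel\"of produces a unique solution $(x(t),\hat{\theta}(t))$ on some $[0,T^*)$. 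Lemma \ref{lem:param_boundedness} keeps $\hat{\theta}(t) \in \Theta$ throughout.

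Next, I would re-derive the dissipation inequality \eqref{eq:dissipation_inequality} along the actual trajectory. Differentiating $V_\rmc$ and substituting \eqref{eq:adaptation_law} cancels the sign-indefinite cross terms between $\tilde\theta$ and the regressors $\psi,\phi$; Lemma \ref{lem:proj_property} absorbs the projection correction as a nonpositive contribution; the identity \eqref{eq_err_sub} converts the gradient term into $-\gamma\|e\|^2$; and the QP constraints \eqref{eq:const_cbf}--\eqref{eq:const_clf}, saturated at $(u_*,\delta_*)$, yield
\begin{equation*}
	\dot{V}_\rmc(x(t),\hat\theta(t)) \leq \frac{\alpha(h(x(t)))}{h(x(t))(1+h(x(t)))} - \kappa\lambda V(x(t)) - \gamma\|e(x(t),\hat\theta(t))\|^2 + \kappa\delta_*(x(t),\hat\theta(t)).
\end{equation*}

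The core step is a finite-time contradiction. Suppose $h(x(t)) \downarrow 0$ as $t \uparrow T^\dagger$ for some $T^\dagger \in (0,T^*]$; then $V_\rmc \to +\infty$. I would show the right-hand side of the dissipation inequality stays uniformly bounded on $[0,T^\dagger)$. The crucial observation is that $\alpha(h)/[h(1+h)]$ does \emph{not} blow up as $h \downarrow 0$: local Lipschitz continuity of $\alpha$ with $\alpha(0)=0$ yields $\alpha(h)/h \leq L$ for $h$ near zero, hence $\alpha(h)/[h(1+h)] \leq L$. The terms $-\kappa\lambda V$ and $-\gamma\|e\|^2$ are nonpositive, and $\delta_*$ is locally bounded by Theorem \ref{thm:feasibility}. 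Assembling the bounds gives $\dot V_\rmc \leq M$ for a finite $M$ on the trajectory, so $V_\rmc(t) \leq V_\rmc(0) + Mt$ remains finite on any bounded subinterval, contradicting the supposed divergence. Therefore $h(x(t)) > 0$ on $[0,T^*)$, and by closure $x(t) \in \SC$, establishing forward invariance.

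The main technical obstacle is to make the uniform bound on $\dot V_\rmc$ legitimate, since the local-Lipschitz boundedness of $\delta_*$ only yields a usable constant if the trajectory stays in a compact subset of $\Int\SC \times \Theta$. A clean remedy is to work inside the sublevel set $\Omega_c \isdef \{(x,\hat\theta) : V_\rmc(x,\hat\theta) \leq c\}$, which is compact inside $\Int\SC \times \Theta$ because $V$ is radially unbounded, the logarithmic barrier forces $h \geq h_{\min}(c) > 0$, and the projection keeps $\|\hat\theta\| \leq \theta_{\max}$. Choosing $c > V_\rmc(0)$ and running a comparison argument on $\dot V_\rmc \leq M(c)$ inside $\Omega_c$ then shows the trajectory cannot escape $\Omega_c$ in finite time, which closes the forward-invariance claim.
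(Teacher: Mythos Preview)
Your proposal is correct and follows essentially the same route as the paper: derive the dissipation inequality \eqref{eq:Vc_grouped3} from the adaptation law, Lemma~\ref{lem:proj_property}, and the QP constraints, bound $\alpha(h)/[h(1+h)]$ via local Lipschitz continuity of $\alpha$ at zero, and integrate $\dot V_\rmc$ over a hypothetical first-hitting interval to contradict $V_\rmc \to +\infty$. The only divergence is in handling the boundedness of $\delta_*$: the paper bypasses your sublevel-set device by observing that $x$ is continuous on the compact interval $[0,t_1]$, hence confined to a compact set, and then invokes Theorem~\ref{thm:feasibility} to obtain $\sup_{[0,t_1]}\delta_* < \infty$ directly---which is cleaner, since your $\Omega_c$ argument as written does not close (a bound $\dot V_\rmc \le M(c)$ with $M(c)>0$ does not by itself prevent escape from $\Omega_c$ in finite time).
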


\begin{proof}
	Consider the composite function $V_\rmc$ defined in \eqref{eq:composite_barrier}. Taking the time derivative along the closed-loop trajectories yields \eqref{eq:Vc_grouped}. 
	Note that since \ref{assum:theta} is satisfied and $\hat{\theta}(0)\in\Theta$, Lemma \ref{lem:param_boundedness} implies that for all $t\ge0$, $\hat{\theta}(t)\in{\Theta}$.
	Therefore,
	substituting the adaptation law \eqref{eq:adaptation_law} into \eqref{eq:Vc_grouped}, and using Lemma \ref{lem:proj_property} and \eqref{eq_err_sub} implies that
	\begin{align}
		\dot{V}_\rmc(x, \hat{\theta}) 
		&=-\frac{\dot h_{\hat \theta}(x,\hat\theta)}{h(x)(1+h(x))} + \kappa \dot V_{\hat \theta}(x,\hat{\theta})+\gamma \tilde{\theta}^\rmT F(x)^\rmT e(x,\hat\theta)\nn\\
		&\quad- \tilde{\theta}^\rmT \Gamma^{-1} \left( \kappa \Gamma\phi(x)^\rmT - \frac{1}{h(x)(1+h(x))} \Gamma\psi(x)^\rmT +\gamma\Gamma F(x)^\rmT e(x,\hat\theta)- \dot{\hat{\theta}} \right)\nn\\
		&\le -\frac{\dot h_{\hat \theta}(x,\hat\theta)}{h(x)(1+h(x))} + \kappa \dot V_{\hat \theta}(x,\hat{\theta})+\gamma \tilde{\theta}^\rmT F(x)^\rmT e(x,\hat\theta)\nn\\
		&= -\frac{\dot h_{\hat \theta}(x,\hat\theta)}{h(x)(1+h(x))} + \kappa \dot V_{\hat \theta}(x,\hat{\theta})-\gamma \| e(x,\hat\theta)\|^2.
		\label{eq:Vc_grouped2}
	\end{align}
	Note that \ref{assum:h_is_CBF_and_h(0)_greater_0} and \ref{assum:clf} are satisfied. Thus, using \eqref{eq_doth_hat} and \eqref{eq_dotV_hat}, and substituting \eqref{eq:const_cbf} and \eqref{eq:const_clf} into \eqref{eq:Vc_grouped2} yields
	\begin{align}
		\dot{V}_\rmc(x, \hat{\theta}) 
		&\le \frac{\alpha( h(x))}{h(x)(1+h(x))} - \kappa \lambda V(x)-\gamma \| e(x,\hat\theta)\|^2+\kappa\delta_*(x, \hat{\theta}).
		\label{eq:Vc_grouped3}
	\end{align}
	
	Next, suppose for contradiction, that $\mathcal{C}$ is not forward invariant with respect to \eqref{eq:sys_dyn}. Then, there exists $t_1 > 0$ such that $h(x(t_1)) = 0$ and, for all $t \in [0, t_1)$, $h(x(t)) > 0$. 
	It thus follows from \eqref{eq:composite_barrier} that
	\begin{equation}
		\lim_{t \uparrow t_1} V_\rmc(x(t), \hat{\theta}(t)) = +\infty.
		\label{eq:Vc_blowup}
	\end{equation}
	Next, note that \eqref{eq:Vc_grouped3} implies that, for all $t\ge0,$
	\begin{equation}
		\dot{V}_\rmc(x(t),\hat\theta(t)) \le \frac{\alpha(h(x(t)))}{h(x(t))(1+h(x(t)))} + \kappa \delta_*(x(t),\hat\theta(t)).
		\label{eq:Vc_simplified}
	\end{equation}
	Since $\alpha$ is  Lipschitz, $\alpha(0)=0$, and $h(t_1)=0$, there exists a constant $c_\alpha > 0$ such that, for all $t\in[0,t_1]$, $\alpha(h(x(t))) \le c_\alpha h(x(t))$. Since, in addition, for all $t \in [0, t_1]$, $h(x(t)) \ge 0$, it follows from \eqref{eq:Vc_simplified} that, for all $t\in[0,t_1]$, 
	\begin{align}
		\dot{V}_\rmc(x(t),\hat\theta(t)) &\le c_\alpha + \kappa \delta_*(x(t),\hat\theta(t)).
		\label{eq:Vc_simplified2}
	\end{align}
	Next, note that since $x$ is continuous on the closed interval $[0, t_1]$, it follows that, for all $t \in [0, t_1],$ $x(t)$ belongs to a compact set. Moreover, since \ref{assum:theta} is satisfied, Lemma \ref{lem:param_boundedness} implies that $\hat{\theta}$ is bounded. Furthermore, since \ref{assum:C_nonempty_reldegree_1} and \ref{assum:gains} are satisfied,  part \ref{Thm2_p2} of Theorem \ref{thm:feasibility} implies $\delta_*$ is Lipschitz continuous. It thus follows that $\delta_{\max} \isdef \sup_{\tau \in [0, t_1]} \delta_*(x(\tau), \hat{\theta}(\tau))$ exists. 
	It thus follows from \eqref{eq:Vc_simplified2} that, for all $t\in[0,t_1]$,  
	\begin{align}
		\dot{V}_\rmc(x(t),\hat\theta(t)) 
		& \le c_\alpha + \kappa \delta_{\max}.
		\label{eq:Vc_simplified4}
	\end{align}
	Integrating \eqref{eq:Vc_simplified4} over the interval $[0, t_1]$ yields
	\begin{equation}
		V_\rmc(x(t_1),\hat\theta(t_1)) \le V_\rmc(x(0),\hat\theta(0)) + t_1(c_\alpha + \kappa \delta_{\max}) < +\infty,
	\end{equation}
	which contradicts \eqref{eq:Vc_blowup}. Thus, for all $t\ge0$, $h(x(t))>0$, which confirms the result.
\end{proof}

%

To guarantee the uniform boundedness of all closed-loop signals, we impose the following structural assumption on the system's asymptotic behavior.

\begin{enumerate}[label=(A\arabic*),left=10pt,start=6]
	\item \label{assum:feasibility}
	There exists a compact set $\mathcal{S} \subset \mathcal{X}$ containing the origin and a constant $\varepsilon > 0$ such that, for all $x \in \mathcal{X} \setminus \mathcal{S}$ and all $\hat{\theta} \in \Theta$,
	\begin{equation}
		\lambda V(x) - \delta_*(x, \hat{\theta}) \ge \varepsilon.
	\end{equation}
\end{enumerate}

Assumption \ref{assum:feasibility} ensures that outside the set $\mathcal{S}$, the drive for stability $\lambda V$ outpaces the cost of safety $\delta_*$. Physically, the relaxation term $\delta_*$ represents the control effort ``wasted'' to satisfy safety constraints or compensate for parameter errors. By requiring the stability term to dominate, we ensure the controller always retains enough authority to pull the system back toward the origin. This structure is typical of mechanical systems where the CLF grows quadratically, while the uncertainty and safety conflicts grow only linearly or sub-quadratically.

The following proposition identifies sufficient conditions for this property to hold.

\begin{lemma} \label{prop:asymptotic_dominance}
	Assume that  \ref{assum:theta}--\ref{assum:gains} and \ref{assum:clf} are satisfied, and let $\lambda>0$ and $u_\rmn\colon \SX \to \BBR^m$ be such that, for all $x\in\SX$, 
	\begin{equation}
		L_f V(x) + \phi(x)\theta_*+ L_G V(x) u_{\rm n}(x) \le -\lambda V(x).\label{eq_prop1V}
	\end{equation}
	In addition, assume that there exists $c_{u} > 0$ such that, for all $x \in \SX$, 
	\begin{equation}
		\|u_{\rm n}(x)\|_{R(x)} \le c_{u} \sqrt{V(x)}.\label{eq_unom2}
	\end{equation} 
	Furthermore, assume that there exists $c_{F} > 0$ such that, for all $x \in \SX$, at least one of the following conditions holds:
	\begin{enumerate}
		\item\label{eq_111}   $\|\phi(x)\| \le c_{F} \sqrt{V(x)}$. 
		\item\label{eq_222}  $\|\phi(x)\| \le c_{F} V(x)$ and $\lambda > 2c_{F} \theta_{\max}$.
	\end{enumerate}
	Then,  \ref{assum:feasibility} is satisfied.
\end{lemma}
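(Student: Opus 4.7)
The plan is to exploit the optimality of $(u_*,\delta_*)$ in the QP \eqref{eq:qp_cost_unified}--\eqref{eq:const_clf} against a feasible candidate built from the nominal stabilizer $u_{\rm n}$. First, I would introduce the candidate pair $(u_{\rm n}(x), \delta_{\rm n}(x,\hat\theta))$ with $\delta_{\rm n} \isdef \max\{0, L_f V(x) + \phi(x)\hat\theta + L_G V(x)u_{\rm n}(x) + \lambda V(x)\}$, so that the CLF constraint \eqref{eq:const_clf} is satisfied by construction. Writing $\hat\theta = \theta_* + \tilde\theta$, invoking the nominal stability bound \eqref{eq_prop1V} to cancel the bracketed term inside the $\max$, and using $\|\tilde\theta\| \le 2\theta_{\max}$ (which follows from Lemma \ref{lem:param_boundedness} together with Assumption \ref{assum:theta}) would then yield the key pointwise estimate $\delta_{\rm n}(x,\hat\theta)\le 2\theta_{\max}\|\phi(x)\|$.

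Granting feasibility of $(u_{\rm n}, \delta_{\rm n})$ in the full QP, optimality gives $\tfrac12\|u_*\|_{R(x)}^2 + \rho\delta_*^2 \le \tfrac12\|u_{\rm n}\|_{R(x)}^2 + \rho\delta_{\rm n}^2$, and hence $\rho\delta_*^2 \le \tfrac12\|u_{\rm n}\|_{R(x)}^2 + \rho\delta_{\rm n}^2$. Substituting \eqref{eq_unom2} and the bound on $\delta_{\rm n}$ and applying $\sqrt{a+b}\le \sqrt a+\sqrt b$ produces $\delta_*(x,\hat\theta) \le \tfrac{c_u}{\sqrt{2\rho}}\sqrt{V(x)} + 2\theta_{\max}\|\phi(x)\|$. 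The proof would then branch on the two stated cases. In case \eqref{eq_111}, substituting $\|\phi(x)\|\le c_F\sqrt{V(x)}$ yields $\delta_*(x,\hat\theta)\le K_1\sqrt{V(x)}$ for some $K_1>0$, so $\lambda V(x)-\delta_*(x,\hat\theta)\ge \lambda V(x) - K_1\sqrt{V(x)} \to +\infty$ as $V(x)\to\infty$. In case \eqref{eq_222}, substituting $\|\phi(x)\|\le c_F V(x)$ yields $\delta_*(x,\hat\theta)\le 2c_F\theta_{\max}V(x) + \tfrac{c_u}{\sqrt{2\rho}}\sqrt{V(x)}$, so $\lambda V(x)-\delta_*(x,\hat\theta)\ge (\lambda-2c_F\theta_{\max})V(x) - \tfrac{c_u}{\sqrt{2\rho}}\sqrt{V(x)}$, and the hypothesis $\lambda>2c_F\theta_{\max}$ makes the leading coefficient strictly positive.

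In both cases, a threshold $V_0 > 0$ exists such that $V(x) > V_0$ forces $\lambda V(x) - \delta_*(x,\hat\theta) \ge \varepsilon$ for some $\varepsilon > 0$, uniformly in $\hat\theta \in \Theta$. I would close the argument by taking $\mathcal{S}\isdef \{x\in\SX\colon V(x)\le V_0\}$: positive definiteness and radial unboundedness of $V$ from Definition \ref{def:clf} make $\mathcal{S}$ compact, and $V(0)=0$ places the origin inside $\mathcal{S}$, so Assumption \ref{assum:feasibility} holds on $\SX\setminus\mathcal{S}$.

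The main obstacle I anticipate is rigorously justifying feasibility of the candidate $(u_{\rm n},\delta_{\rm n})$, since $u_{\rm n}$ is not a priori compatible with the CBF constraint \eqref{eq:const_cbf}. One remedy is to absorb any bounded region where $u_{\rm n}$ violates safety into a larger $\mathcal{S}$; another is to replace $u_{\rm n}$ by a safety-projected surrogate that preserves the quadratic cost bound \eqref{eq_unom2} up to a multiplicative constant. Failing that, a KKT-based analysis of the QP's active-set structure would serve as a fallback.
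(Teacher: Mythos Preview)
Your proposal is correct and mirrors the paper's proof almost exactly: the paper defines $\delta_{\rm n} \isdef \max\{0,\phi(x)\tilde\theta\}$, invokes optimality of $(u_*,\delta_*)$ to bound $\rho\delta_*^2$ by $\tfrac12\|u_{\rm n}\|_{R(x)}^2 + \rho\delta_{\rm n}^2$, and then splits on the two growth hypotheses precisely as you outline. The CBF-feasibility concern you flag is legitimate and is in fact glossed over in the paper's own argument as well---it verifies only that $(u_{\rm n},\delta_{\rm n})$ satisfies the CLF constraint \eqref{eq:const_clf} before invoking the cost comparison, so your instinct to worry about \eqref{eq:const_cbf} is sharper than the published proof.
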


\begin{proof}
	Note that since \ref{assum:theta} is satisfied, it follows that $\Theta$ defined by \eqref{eq_Theta} exists. 
	Let $\hat{\theta}\in{\Theta}.$
	Since \ref{assum:clf} is satisfied, using \eqref{eq:h_dynamics_split2} and \eqref{eq_prop1V} implies that, for all $x\in\SX$,
	\begin{align}
		\dot{V}_{\hat{\theta}}(x, \hat{\theta}) &= \dot{V}(x) + \phi(x)\tilde{\theta} \nn \\
		&\le -\lambda V(x) + \phi(x)\tilde{\theta} \nn \\
		&\le -\lambda V(x) + \delta_{\rm n}(x, \hat{\theta}),
	\end{align}
	where $\delta_{\rm n}(x, \hat{\theta}) \isdef \max\{0, \phi(x)\tilde{\theta}\}$. 
	It thus follows that, for all $x\in\SX$, the pair $(u_{\rm n}(x), \delta_{\rm n}(x,\hat{\theta}))$ satisfies the stability constraint \eqref{eq:const_clf}. Moreover, since \ref{assum:C_nonempty_reldegree_1} and \ref{assum:gains} are satisfied, it follows that,  for all $x\in\SX$, $(u_*(x,\hat{\theta}), \delta_*(x,\hat{\theta}))$ is the optimal solution minimizing the cost \eqref{eq:qp_cost_unified}. It thus follows that, for all $x\in\SX$, 
	\begin{align}
		\rho \delta_*(x, \hat{\theta})^2 &\le \frac{1}{2} \|u_*(x, \hat{\theta})\|_{R(x)}^2 + \rho \delta_*(x, \hat{\theta})^2 \nn\\
		&\le \frac{1}{2} \|u_{\rm n}(x)\|_{R(x)}^2 + \rho \delta_{\rm n}(x, \hat{\theta})^2.\label{eq_rhodelta}
	\end{align}
	Note that, since $\hat{\theta}, \theta_* \in \Theta$, using the triangle inequality $\|\tilde{\theta}\| \le \|\hat{\theta}\| + \|\theta_*\| \le 2\theta_{\max}$ implies that, for all $x\in\SX$,  
	\begin{equation}
		\delta_{\rm n}(x, \hat{\theta}) \le \|\phi(x)\| \|\tilde{\theta}\| \le 2\|\phi(x)\| \theta_{\max}.\label{eq_delt_nom1}
	\end{equation}
	Taking the square root of both sides of \eqref{eq_rhodelta}, and  dividing by $\sqrt{\rho}$ implies that, for all $x\in\SX$,
	\begin{align}
		\delta_*(x, \hat{\theta}) & \le \frac{\|u_{\rm n}(x)\|_{R(x)}}{\sqrt{2\rho}}   +  \delta_{\rm n}(x, \hat{\theta})\nn\\
		&  \le \frac{c_u \sqrt{V(x)}}{\sqrt{2\rho}}  + 2\theta_{\max} \|\phi(x)\|,\label{eq_dleta3}
	\end{align}
	where the last inequality follows from \eqref{eq_unom2} and \eqref{eq_delt_nom1}. 
	
	First, consider the case where \ref{eq_111} is satisfied, and it follows from \eqref{eq_dleta3} that, for all $x\in\SX$, 
	\begin{equation}
		\lambda V(x) - \delta_*(x, \hat{\theta}) \ge \lambda V(x) - \left( \frac{c_u}{\sqrt{2\rho}} + 2c_F\theta_{\max}  \right) \sqrt{V(x)},
	\end{equation}
	which, since $V$ is radially unbounded, implies that $\lim_{\|x\| \to \infty} (\lambda V(x) - \delta_*(x, \hat{\theta}))= +\infty$.

	Next, consider the case where \ref{eq_222} is satisfied, and it follows from \eqref{eq_dleta3} that, for all $x\in\SX$,  
	\begin{equation}
		\lambda V(x) - \delta_*(x, \hat{\theta}) \ge (\lambda - 2c_F\theta_{\max} ) V(x) - \frac{c_u}{\sqrt{2\rho}} \sqrt{V(x)},
	\end{equation}
	which, since $V$ is radially unbounded and  $\lambda > 2c_F\theta_{\max} $, implies that  $\lim_{\|x\| \to \infty} (\lambda V(x) - \delta_*(x, \hat{\theta}))= +\infty$.
	Therefore, since in both cases \ref{eq_111} and \ref{eq_222}, $\lim_{\|x\| \to \infty} (\lambda V(x) - \delta_*(x, \hat{\theta}))= +\infty$, it follows that, for all $\varepsilon > 0$, there exists a compact level set $\mathcal{S} \subset \SX$ such that for all $x \in\SX\setminus \SSS$, $\lambda V(x) - \delta_*(x, \hat{\theta}) \ge \varepsilon$, which confirms \ref{assum:feasibility}.
\end{proof}

Lemma \ref{prop:asymptotic_dominance} establishes that for large states, the drive for stability outpaces the relaxation penalties required for safety. Building on this asymptotic dominance, the following theorem guarantees that all closed-loop signals remain uniformly bounded.

\begin{theorem}[]
	\label{thm:main_results}
	Consider the closed-loop system comprising the system \eqref{eq:sys_dyn}, the parameter update law \eqref{eq:adaptation_law}, and  the adaptive control law $u_*$ defined by \eqref{eq:qp_cost_unified}--\eqref{eq:const_clf}. Assume that \ref{assum:theta}--\ref{assum:feasibility} are satisfied, and let $x(0) \in \Int\mathcal{C}$ and $\hat{\theta}(0) \in \Theta$. Then, all closed-loop signals $x$, $\hat{\theta}$, $u$, and $\delta$ are uniformly bounded.
\end{theorem}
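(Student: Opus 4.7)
The plan is to combine the forward invariance from Theorem \ref{thm:main_results0} and the parameter boundedness from Lemma \ref{lem:param_boundedness} with a Lyapunov-style argument on $V_\rmc$ to extract uniform bounds on all closed-loop signals. Lemma \ref{lem:param_boundedness} gives $\hat\theta(t)\in\Theta$ for all $t\ge 0$, so $\hat\theta$ and $\tilde\theta$ are uniformly bounded by construction of $\Theta$. Theorem \ref{thm:main_results0} gives $x(t)\in\Int\mathcal{C}$ for all $t\ge 0$. The crux is therefore to bound $V_\rmc$ uniformly in $t$: once this is achieved, $V(x)$ is bounded (so $x$ is bounded by radial unboundedness of $V$), the barrier term is bounded (so $h(x)$ is bounded away from zero), and Theorem \ref{thm:feasibility} then yields uniform bounds on $u_*$ and $\delta_*$ because these are locally Lipschitz on $\Int\mathcal{C}\times\BBR^p$ and the trajectory stays in a compact subset of $\Int\mathcal{C}\times\Theta$.

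For the bound on $V_\rmc$, I would start from the dissipation inequality \eqref{eq:Vc_grouped3} and invoke Assumption \ref{assum:feasibility} to obtain, whenever $x\notin\mathcal{S}$,
\[
\dot V_\rmc(x,\hat\theta)\le \frac{\alpha(h(x))}{h(x)(1+h(x))}-\kappa\varepsilon-\gamma\|e(x,\hat\theta)\|^2.
\]
Each sublevel set $\Omega_c\isdef\{(x,\hat\theta)\in\Int\mathcal{C}\times\Theta: V_\rmc(x,\hat\theta)\le c\}$ is compact in $\Int\mathcal{C}\times\Theta$: the bound $V_\rmc\le c$ simultaneously forces $V(x)\le c/\kappa$ (a compact sublevel set of the radially unbounded $V$) and $-\ln(h(x)/(1+h(x)))\le c$ (so $h(x)\ge h_{\min}(c)>0$). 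On such $\Omega_c$, the barrier contribution $\alpha(h(x))/[h(x)(1+h(x))]$ and $\delta_*(x,\hat\theta)$ are continuous on compact sets and hence bounded. I would then argue by contradiction, analogous to the blowup argument in the proof of Theorem \ref{thm:main_results0}: assuming $V_\rmc$ is not uniformly bounded, pick $c$ larger than $V_\rmc(x(0),\hat\theta(0))$ with $\mathcal{S}\subset\{V\le c/\kappa\}$, let $t_1$ be the first exit time of $\Omega_c$, and use the integrated dissipation on $[0,t_1]$---partitioning into times inside and outside $\mathcal{S}$ and exploiting the compactness of $\Omega_c$ to bound each piece---to contradict $V_\rmc(t_1)=c$ when $c$ is chosen sufficiently large.

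With $V_\rmc$ uniformly bounded, the closed-loop trajectory $(x(t),\hat\theta(t))$ remains in a compact subset $K\subset\Int\mathcal{C}\times\Theta$, and part \ref{Thm2_p2} of Theorem \ref{thm:feasibility} delivers uniform bounds on $u_*$ and $\delta_*$. The hard part will be the contradiction step: because the uniform bound on $\alpha(h)/[h(1+h)]$ over $\Omega_c$ can deteriorate as $c$ grows (since $h_{\min}(c)\downarrow 0$), one must carefully show that the negative drift $-\kappa\varepsilon$ accumulated outside $\mathcal{S}$ dominates both the barrier contribution and the bounded growth of $V_\rmc$ inside $\mathcal{S}$, so that no trajectory can escape a sufficiently large sublevel set.
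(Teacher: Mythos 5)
Your overall architecture matches the paper's: bound $V_\rmc$ along trajectories, use the barrier and CLF terms inside $V_\rmc$ to confine $(x,\hat\theta)$ to a compact subset of $\Int\mathcal{C}\times\Theta$, and then invoke part \ref{Thm2_p2} of Theorem \ref{thm:feasibility} to bound $u_*$ and $\delta_*$. However, the step you explicitly flag as ``the hard part'' is a genuine gap, and it is exactly where the paper's proof supplies the missing idea. You worry that the bound on $\alpha(h(x))/[h(x)(1+h(x))]$ over a sublevel set $\Omega_c$ deteriorates as $c$ grows because $h_{\min}(c)\downarrow 0$. It does not: since $\alpha$ is locally Lipschitz with $\alpha(0)=0$, there is $c_\alpha>0$ with $\alpha(r)\le c_\alpha r$, so
\begin{equation*}
	\frac{\alpha(h(x))}{h(x)(1+h(x))}\le\frac{c_\alpha}{1+h(x)}\le c_\alpha
\end{equation*}
uniformly over all of $\Int\mathcal{C}$, with no degradation near the boundary. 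With this, Assumption \ref{assum:feasibility} gives $\dot V_\rmc\le c_\alpha-\kappa\varepsilon$ whenever $x\notin\mathcal{S}$, and taking $\kappa>c_\alpha/\varepsilon$ makes $\dot V_\rmc<0$ there. The paper then avoids your integration-and-partition scheme entirely: setting $s_0\isdef\max_{\mathcal{S}\times\Theta}V_\rmc$ and $s\ge\max\{V_\rmc(x(0),\hat\theta(0)),s_0\}$, every point outside $\Int\Omega_s$ has its $x$-component outside $\mathcal{S}$, so $\Omega_s$ is forward invariant by a standard level-set argument; no accounting of time spent inside versus outside $\mathcal{S}$ is required. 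Your partition idea would in fact struggle on its own terms, since the trajectory can accumulate unbounded total time inside $\mathcal{S}$, where $\dot V_\rmc$ may be positive, and nothing in your sketch controls that accumulation.

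One further remark: closing the argument this way requires the gain condition $\kappa>c_\alpha/\varepsilon$, which the paper imposes inside the proof rather than in the theorem's hypotheses. Any completion of your approach would need the same (or an equivalent) condition, so it is worth making it explicit. The remainder of your plan---$\hat\theta$ bounded by Lemma \ref{lem:param_boundedness}, $x$ bounded via radial unboundedness of $V$, and $u_*,\delta_*$ bounded by continuity on the resulting compact set---is correct and coincides with the paper's conclusion.
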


\begin{proof}
	Consider the time derivative of $V_\rmc$. Since \ref{assum:theta} is satisfied, using the derivation in \eqref{eq:Vc_grouped3} implies that
	\begin{equation}
		\dot{V}_\rmc(x, \hat{\theta}) \le \frac{\alpha(h(x))}{h(x)(1+h(x))} - \kappa \Big( \lambda V(x) - \delta_*(x, \hat{\theta}) \Big).
		\label{eq_dVc}
	\end{equation}
	Since $\alpha$ is a Lipschitz class $\mathcal{K}$ function with $\alpha(0)=0$, there exists a constant $c_\alpha > 0$ such that, for all $r \ge 0$, $\alpha(r) \le c_\alpha r$.
	Substituting this into \eqref{eq_dVc} yields:
	\begin{align}
		\dot{V}_\rmc(x, \hat{\theta}) & \le \frac{c_\alpha}{1+h(x)} - \kappa \left( \lambda V(x) - \delta_*(x, \hat{\theta}) \right)\nn\\
		& \le c_\alpha - \kappa \left( \lambda V(x) - \delta_*(x, \hat{\theta}) \right),\label{eq_VCD5}
	\end{align}
	where the last inequality holds because \ref{assum:theta}--\ref{assum:clf} are satisfied, and thus Theorem \ref{thm:main_results0} implies that, for all $t \ge 0$,  $h(x(t)) > 0$.
	
	Next, note that since \ref{assum:feasibility} is satisfied, it follows from  \eqref{eq_VCD5} that, for all $x\in\mathcal{X}\setminus\SSS$  and all $\hat\theta\in\Theta$,
	\begin{equation}
		\dot{V}_\rmc(x, \hat{\theta}) \le c_\alpha - \kappa \varepsilon.\label{eq_Vdot10}
	\end{equation}
	Let $\kappa > \frac{c_\alpha}{\varepsilon}$, and it follows from \eqref{eq_Vdot10} that, for all $x \in \mathcal{X} \setminus \mathcal{S}$,
	\begin{equation}
		\dot{V}_\rmc(x, \hat{\theta}) <0.\label{eq_Vcdda}
	\end{equation}
	We now establish the forward invariance of a specific sublevel set. Let $s_{0} \isdef \max_{x \in \mathcal{S}, \hat{\theta} \in \Theta} V_\rmc(x, \hat{\theta})$ and define the sublevel set $\Omega_s \isdef \{(x, \hat{\theta}) \in \mathcal{X} \times \Theta \mid V_\rmc(x, \hat{\theta}) \le s\}$, where $s \ge \max\{V_\rmc(x(0), \hat{\theta}(0)), s_{0}\}$.
	Since  $s \ge s_{0}$, it follows that, for all $(x, \hat{\theta})\in \SX\setminus \Int\Omega_s$, $(x, \hat{\theta}) \in \SX\setminus\mathcal{S} \times \Theta$.
	It thus follows from \eqref{eq_Vcdda} that $\Omega_s$ is forward invariant with respect to \eqref{eq:sys_dyn}. Thus, for all $t \ge 0$, $(x(t), \hat{\theta}(t)) \in \Omega_s$, which implies $V_\rmc(t) \le s$.
	
	Finally, note that since \ref{assum:clf} implies that $V$ is radially unbounded, there exists a class $\mathcal{K}$ function $\underline{\alpha}$ such that, for all $t \ge 0$, $\underline{\alpha}(\|x(t)\|) \le V(x(t))$.
	In addition, since \ref{assum:theta} is satisfied, Lemma \ref{lem:proj_property} implies that, for all $t \ge 0$, $V_\rmc(x(t), \hat{\theta}) \ge V(x(t))$. It thus follows that  $\underline{\alpha}(\|x(t)\|) \le V(x(t)) \le V_\rmc(t) \le s$ . Therefore, for all $t \ge 0$, $\|x(t)\| \le \underline{\alpha}^{-1}(s)$, which confirms that $x$ is uniformly bounded.
	Since, in addition, Lemma \ref{lem:param_boundedness} implies, for all $t \ge 0$,  $\hat{\theta}(t)\in\Theta$, it follows that the joint state $(x(t), \hat{\theta}(t))$ evolves within a compact domain.
	Therefore, since the image of a compact set under a continuous map is compact, part \ref{Thm2_p2} of Theorem \ref{thm:feasibility} implies that  the control input $u_*$ and relaxation $\delta_*$ are uniformly bounded, which confirms the result.
\end{proof}

\section{Comparison with Robust CBF}

To quantify the advantage of the adaptive framework, we compare its admissible control space against a standard robust baseline. Let $\theta_\rme \in \BBR^p$ represent a fixed parameter estimate. We define the worst-case uncertainty margin $\sigma_{\rm rob}$ as the maximum possible deviation over the parameter set given by
\begin{equation}
	\sigma_{\rm rob}(x, {\theta}_\rme) \isdef \max_{\vartheta \in \Theta} \left| L_F h(x) (\vartheta - {\theta}_\rme) \right|.\label{eq_sisgam}
\end{equation}
A standard robust CBF (R-CBF) guarantees safety by guarding against the worst-case parameter realization by imposing the  conservative constraint
\begin{equation}
	L_f h(x) + \psi(x) {\theta}_\rme +L_G h(x) u  - \sigma_{\rm rob}(x, {\theta}_\rme)\geq - \alpha(h(x)).
	\label{eq:robust_cbf}
\end{equation}
In contrast, the proposed CaCBF uses the evolving estimate $\hat{\theta}$ to enforce the nominal constraint \eqref{eq:const_cbf}. By swapping the static worst-case margin for a dynamic adaptation process, the CaCBF expands the feasible control space. The following theorem confirms this advantage: it proves that the admissible control set of the adaptive formulation contains the robust set as a subset.

\begin{theorem} \label{prop:reduced_conservatism}
	Assume that \ref{assum:theta} is satisfied, let $\theta_\rme\in{\BBR^p}$ and $\hat{\theta}\in{\Theta}$. Furthermore, let $\SU_{\rm rob}(x, {\theta_\rme})$ and $\SU_{\rm adp}(x, \hat{\theta})$ denote the sets of admissible control inputs satisfying the R-CBF condition \eqref{eq:robust_cbf} and the CaCBF condition \eqref{eq:const_cbf}, respectively. Then, for all $x \in \mathcal{C}$, 
	\begin{equation}
		\SU_{\rm rob}(x, {\theta_\rme}) \subseteq \SU_{\rm adp}(x, \hat{\theta}).\label{eq_rob_adap}
	\end{equation}
\end{theorem}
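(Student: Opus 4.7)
The plan is to show the containment pointwise: fix $x \in \mathcal{C}$ and an arbitrary $u \in \SU_{\rm rob}(x, \theta_\rme)$, and verify that $u$ satisfies the adaptive constraint \eqref{eq:const_cbf} by rewriting the robust inequality in terms of $\hat{\theta}$ and absorbing the mismatch $\psi(x)(\hat{\theta}-\theta_\rme)$ into the robust margin $\sigma_{\rm rob}(x, \theta_\rme)$.

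Concretely, starting from the robust condition
\begin{equation*}
L_f h(x) + \psi(x)\theta_\rme + L_G h(x) u - \sigma_{\rm rob}(x, \theta_\rme) \ge -\alpha(h(x)),
\end{equation*}
I would add and subtract $\psi(x)\hat{\theta}$ on the left-hand side to obtain
\begin{equation*}
L_f h(x) + \psi(x)\hat{\theta} + L_G h(x) u \ge -\alpha(h(x)) + \bigl[\sigma_{\rm rob}(x, \theta_\rme) - \psi(x)(\hat{\theta} - \theta_\rme)\bigr].
\end{equation*}
Thus it suffices to show the bracketed quantity is nonnegative, which reduces to proving $\psi(x)(\hat{\theta} - \theta_\rme) \le \sigma_{\rm rob}(x, \theta_\rme)$.

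This last inequality follows immediately from the definition \eqref{eq_sisgam} together with Assumption \ref{assum:theta}: since $\hat{\theta} \in \Theta$, choosing $\vartheta = \hat{\theta}$ in the maximization yields
\begin{equation*}
|\psi(x)(\hat{\theta} - \theta_\rme)| \le \max_{\vartheta \in \Theta}\, |\psi(x)(\vartheta - \theta_\rme)| = \sigma_{\rm rob}(x, \theta_\rme),
\end{equation*}
so in particular $\psi(x)(\hat{\theta} - \theta_\rme) \le \sigma_{\rm rob}(x, \theta_\rme)$. Combining the two displayed inequalities gives the adaptive constraint \eqref{eq:const_cbf}, establishing $u \in \SU_{\rm adp}(x, \hat{\theta})$ and hence \eqref{eq_rob_adap}.

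There is no substantive obstacle here; the entire argument is a direct consequence of the fact that $\sigma_{\rm rob}$ is a worst-case bound over the same set $\Theta$ in which the projection operator keeps $\hat{\theta}$ (via Lemma \ref{lem:param_boundedness}). The only subtlety worth flagging in the write-up is that the inclusion holds for \emph{any} $\hat{\theta} \in \Theta$, not just the one produced by the adaptation law, so the result is a purely geometric statement about the feasible sets rather than a consequence of closed-loop dynamics. It is also worth noting that the containment is typically strict whenever $\psi(x) \ne 0$ and $\hat{\theta}$ lies in the interior of $\Theta$, which is what explains the empirical performance gains observed in the simulations.
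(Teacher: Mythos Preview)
Your proposal is correct and mirrors the paper's proof essentially line for line: both fix $u \in \SU_{\rm rob}(x,\theta_\rme)$, invoke $\hat{\theta}\in\Theta$ to bound $|\psi(x)(\hat{\theta}-\theta_\rme)|\le\sigma_{\rm rob}(x,\theta_\rme)$, and absorb the mismatch into the robust margin. One minor algebraic slip to fix in the write-up: after adding and subtracting $\psi(x)\hat{\theta}$, the bracketed residual should be $\sigma_{\rm rob}(x,\theta_\rme)+\psi(x)(\hat{\theta}-\theta_\rme)$ (plus, not minus), so the needed inequality is $\psi(x)(\theta_\rme-\hat{\theta})\le\sigma_{\rm rob}(x,\theta_\rme)$ --- but this follows from the same absolute-value bound, so the argument is unaffected.
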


\begin{proof}
	Let $x\in\SC$ and $u \in \SU_{\rm rob}(x, {\theta_\rme})$. By definition, $u$ satisfies the robust inequality
	\begin{equation}
		L_G h(x) u \geq -L_f h(x) - \psi(x) {\theta_\rme} - \alpha(h(x)) + \sigma_{\rm rob}(x, {\theta_\rme}).\label{eq_SIGMA_ROB_PROOF}
	\end{equation}
	Since $\hat{\theta} \in \Theta$, it follows from the definition of the worst-case margin that $\sigma_{\rm rob}(x, {\theta_\rme}) \geq \big|\psi(x)(\hat{\theta} - \theta_\rme)\big| \geq \psi(x)(\theta_\rme - \hat{\theta})$. It thus follows from \eqref{eq_SIGMA_ROB_PROOF} that
	\begin{align}
		L_G h(x) u &\geq -L_f h(x) - \psi(x) \theta_\rme - \alpha(h(x)) + \psi(x)(\theta_\rme - \hat{\theta}) \nonumber \\
		&= -L_f h(x) - \psi(x) \hat{\theta} - \alpha(h(x)),
	\end{align}
	which matches \eqref{eq:const_cbf}. Thus, $u \in \SU_{\rm adp}(x, \hat{\theta})$, which confirms the inclusion \eqref{eq_rob_adap}.
\end{proof}

Equation \eqref{eq_rob_adap} confirms that CaCBF recovers control authority that the R-CBF approach would otherwise discard. R-CBF shrinks the safe set, guarding against the worst possible parameter configuration at every instant. In contrast, CaCBF uses the current parameter estimate. Rather than relying on a static, worst-case margin, CaCBF relies on the adaptation mechanism to correct errors online. This allows the system to compensate for uncertainty dynamically, granting access to a larger set of safe control inputs.
Figure \ref{fig:set_containment} provides a geometric interpretation of this result. It illustrates that, for each $x\in\SC$, $\theta_\rme\in{\BBR^p}$, and $\hat{\theta}\in\Theta$, the robust safe control set $\SU_{\rm rob}(x, \theta_\rme)$, computed using the fixed estimate $\theta_\rme$, is nested within the adaptive safe control set $\SU_{\rm adp}(x, \hat{\theta})$. As the estimate $\hat{\theta}$ converges to the true parameter, the dashed boundary of $\SU_{\rm adp}(x, \hat{\theta})$ expands to recover the true safe control set $\SU_{\rm cbf}(x)$. Equation \eqref{eq_rob_adap} confirms that CaCBF recovers control authority that the robust formulation discards.
\begin{figure}[htpb]
	\centering
	\definecolor{myb}{HTML}{0071BC} 
	\definecolor{myr}{HTML}{ED135A} 
	\definecolor{myn}{HTML}{696969} 
	
	\begin{tikzpicture}[scale=0.9]
		\fill[fill=myn!5] (0,0) ellipse (4.2cm and 2.6cm);
		\draw[color=black, line width=1.0pt] (0,0) ellipse (4.2cm and 2.6cm);
		\node[color=black] at (0, 2.0) { $\SU_{\rm cbf}(x)$};
		
		\fill[fill=myr!10] (0,-0.3) ellipse (3.0cm and 1.8cm);
		\draw[dashed, color=myr, line width=1.2pt] (0,-0.3) ellipse (3.0cm and 1.8cm);
		\node[color=myr] at (0, 0.7) { $\SU_{\rm adp}(x, \hat{\theta})$};
		
		\fill[fill=myb!10] (0,-0.6) ellipse (1.6cm and 0.9cm);
		\draw[color=myb, line width=1.2pt] (0,-0.6) ellipse (1.6cm and 0.9cm);
		\node[color=myb] at (0, -0.6) { $\SU_{\rm rob}(x, \theta_\rme)$};
	\end{tikzpicture}
	\caption{Geometric interpretation of Theorem \ref{prop:reduced_conservatism} at each $x\in\SC$, $\theta_\rme\in{\BBR^p}$, and $\hat{\theta}\in\Theta$. The conservative robust safe control set $\SU_{\rm rob}(x, \theta_\rme)$ (inner, blue) is nested within the adaptive safe control set $\SU_{\rm adp}(x, \hat{\theta})$ (middle, red, dashed). The adaptive set expands toward the true safe control set $\SU_{\rm cbf}(x)$ (outer, solid) as $\hat{\theta}$ converges to the true parameter $\theta_*$.}
	\label{fig:set_containment}
\end{figure}
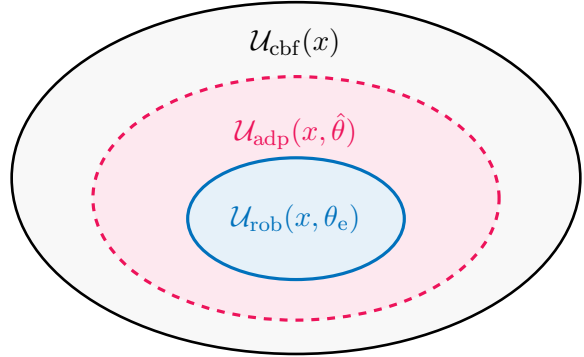
\section{Numerical Examples}
\label{sec:simulation}

We illustrate the proposed CaCBF framework through numerical simulations on three distinct dynamical systems. First, we use an adaptive cruise control problem to illustrate and quantify the benefits of adaptation, showing how the CaCBF recovers control authority that a standard R-CBF would discard. We then broaden the scope to include a quadrotor drone and an inverted pendulum, demonstrating that the safety guarantees hold across diverse uncertain physical systems.

\begin{example}\label{ex_ACC}
	\textit{Adaptive cruise control with unknown resistance force.}
	We illustrate the framework using an adaptive cruise control problem based on the point-mass model in \cite{ames2016control}. The ego vehicle must follow a lead vehicle that executes a variable-velocity profile 
	\begin{equation}
		v_\ell(t)\isdef \begin{cases}
			10,& ~0\le t<25,\\
			14,&25\le t<50,\\
			18,&50\le t<75,\\
			15,&75\le t<100,
		\end{cases}
	\end{equation}
	while maintaining a desired cruising speed $v_\rmd \isdef 26$ m/s and a safe time headway $\tau_\rmd \isdef 1.8$ s. We define the safety set $\mathcal{C}$ via the zeroing barrier function
	\begin{equation}
		h(x) = d - 1.8 v.
	\end{equation} 
	Let the state be $x = \matl v& d\matr ^\rmT \in \BBR^2$, where $v$ is the ego velocity and $d$ is the following distance. The system evolves according to
	\begin{equation}
		\dot{x} = \begin{bmatrix} 0 \\ v_\ell - v \end{bmatrix} + \begin{bmatrix} -\frac{1}{\overline{m}} & -\frac{v}{\overline{m}} & -\frac{v^2}{\overline{m}} \\ 0 & 0 & 0 \end{bmatrix} \theta_* + \begin{bmatrix} \frac{1}{\overline{m}} \\ 0 \end{bmatrix} u,
		\label{eq:acc_dynamics}
	\end{equation}
	where $\overline{m} \isdef 1650$ kg is the mass, $u$ is the wheel force, and $\theta_* = \matl 7& 6& 5\matr^\rmT$ captures the unknown coefficients for rolling resistance and aerodynamic drag.
	The initial conditions are $v(0)=25$~m/s and $d(0)=60$~m.
	To respect traction limits and passenger comfort, we bound the control input such that, for all $t\ge0$,
	\begin{equation}
		-0.25 \overline{m} g \leq u(t) \leq 0.25 \overline{m} g,
		\label{eq:input_constraints}
	\end{equation}
	where $g=9.81\,{\rm m/s^2}$ is the gravitational acceleration.

	We first implement the proposed CaCBF. We select the CLF candidate $V(x) = (v - v_\rmd)^2$ with a nominal convergence rate $\lambda = 2$.
	Inserting the dynamics \eqref{eq:acc_dynamics} and the Lie derivatives calculated with the estimate $\hat{\theta}$ into the  CLF-CBF-QP \eqref{eq:qp_cost_unified}--\eqref{eq:const_clf} yields, for all $x \in \BBR^n$ and $\hat{\theta} \in \BBR^p$, the  optimization
	\begin{align}
		\big({u}_*(x,\hat\theta), \delta_*(x,\hat\theta)\big) = \operatorname*{argmin}_{(u, \delta)\in\BBR\times[0,\infty)} \quad  \frac{1}{2} u^2 + \rho \delta^2,& \\
		\text{s.t.} \quad  v_\ell - v + \frac{1.8\big(\zeta(v)^\rmT \hat{\theta}- u\big)}{\overline{m}}  & \geq 1.8 v-d, \label{eq_safeQP} \\
		\frac{2( v_\rmd-v)\big(\zeta(v)^\rmT \hat{\theta} - u\big)}{\overline{m}} & \leq -2 (v - v_\rmd)^2 + \delta,
	\end{align}
	where the regressor is $\zeta(v) \isdef [ 1~~v~~ v^2]^\rmT$, the penalty weight is $\rho = 10^3$, and the class $\mathcal{K}$ function is the identity function $\alpha(h(x))=h(x)$.
	We set the bounds and gains as $\theta_{\max} = 20$, $\hat{\theta}(0) = 0$, $\Gamma ={\rm diag}(10^4,10^3,10^2)$, $\gamma = 10^2$, and $\kappa = 1$.

	For comparison, we implement a standard R-CBF. We use a fixed parameter estimate $\theta_\rme = \hat{\theta}(0)=0$. To guarantee safety against all $\theta \in \Theta$, the robust controller subtracts a margin proportional to the worst-case uncertainty. Substituting \eqref{eq:acc_dynamics} into the robust condition \eqref{eq:robust_cbf} generates the constraint
	\begin{equation}
		v_\ell - v - \frac{1.8}{\overline{m}} \left(\|\zeta(v)\| \theta_{\max}+u\right) \geq 1.8 v-d,
	\end{equation}
	which replaces \eqref{eq_safeQP} in the QP formulation.
	
	Figure \ref{fig:acc_results} shows the cost of this robustness. The R-CBF exhibits conservatism, applying mechanical braking far earlier than necessary.
	In contrast, the adaptive controller learns the resistance profile, allowing it to account for drag forces within the safety constraints. This reduces the mechanical braking effort and allows the vehicle to close the gap to the lead vehicle more aggressively than the robust trajectory, while maintaining strict safety.

	We quantify performance using three metrics: the minimum safety margin $h_{\min} \isdef \min_{t \in [0,T]} h(x(t))$, which tracks the closest approach to the boundary and where $T=100$~s is the total simulation duration; the total braking effort $E_{\rm brake} \isdef \int_{0}^{T} \big|\min(0, u(t))\big|\,\rmd t$, which measures the mechanical energy spent on deceleration; and the time-averaged clearance $\eta \triangleq \frac{1}{T} \int_{0}^{T} h(x(t)) \, \rmd t$.
	A lower $\eta$ indicates that the controller operates closer to the constraint, reflecting reduced conservatism. We evaluate performance using three quantitative metrics: the minimum safety margin $h_{\min} \isdef \min_{t} h(x(t))$, the total braking effort $E_{\rm brake} \isdef \int_{0}^{T} |\min(0, u(t))|\,\rmd t$, and the time-averaged clearance $\eta \triangleq \frac{1}{T} \int_{0}^{T} h(x(t)) \, \rmd t$.
	The simulation results highlight the efficiency of the adaptive approach. While both controllers require comparable mechanical braking effort $E_{\rm brake} \approx 1.40 \times 10^4$ N$\cdot$s, the robust controller achieves this by maintaining a large, conservative buffer of $h_{\min} = 2.81$ m. In contrast, the CaCBF safely utilizes the available space, approaching within $0.17$ m of the boundary.
	This reduction in conservatism is further confirmed by the average clearance metric: the robust controller operates with a wide margin ($\eta = 6.27$ m), whereas the adaptive formulation reduces this to $\eta = 0.53$ m, demonstrating that the system recovers nearly all the safe operating space that the robust controller discards.
	Furthermore, the robust controller's high average clearance $\eta = 6.03$~m contrasts sharply with the adaptive result $\eta = 0.32$~m, demonstrating that the CaCBF recovers nearly all the safe operating space that the robust controller discards.
	\exampletriangle
\end{example}

\begin{figure}[thpb]
	\centering
	\includegraphics[width=.9\textwidth]{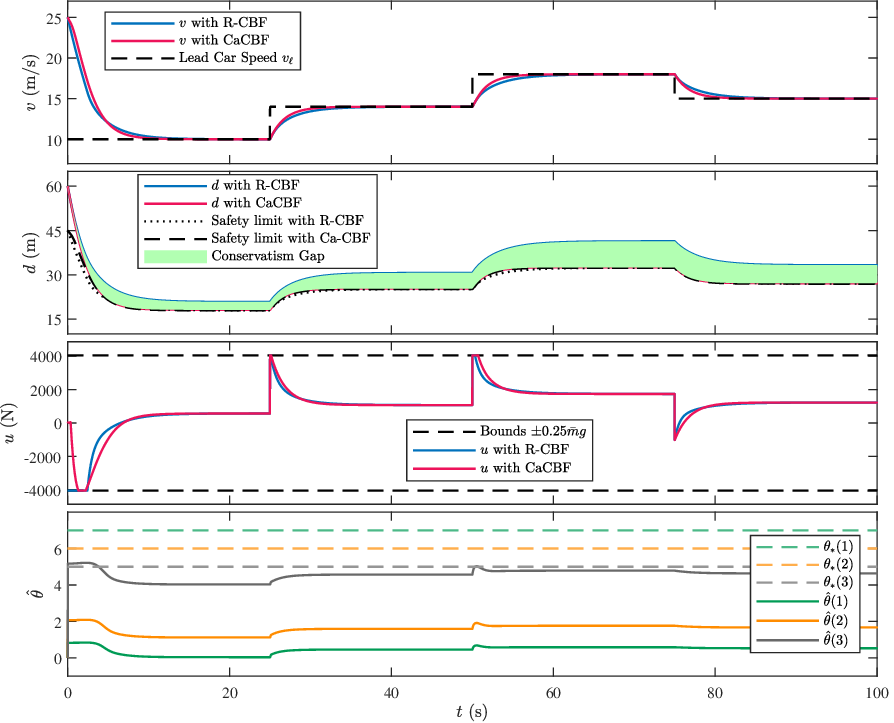} 
	\caption{Example \ref{ex_ACC}: Simulation results for the adaptive cruise control with unknown resistance force. The top subplot shows that the proposed CaCBF (red) tracks the desired speed more effectively than the R-CBF (blue), which exhibits spurious braking.
		The second subplot tracks the inter-vehicle distance $d$, where the green shaded region visualizes the \textit{conservatism gap}: the robust controller maintains an unnecessary buffer due to worst-case assumptions, whereas the adaptive controller operates closer to the effective safety limit $1.8v$. The third subplot plots the control effort $u$, revealing that while both controllers strike the actuation boundaries (dashed black lines), they do so for distinct reasons. The R-CBF is forced into saturation by its conservative worst-case margins, whereas the CaCBF leverages braking authority to maximize tracking performance while remaining strictly safe. The bottom subplot confirms that the parameter estimates $\hat{\theta}$ (solid) need not converge perfectly to the true values $\theta_*$ (dashed) to guarantee safety.}
	\label{fig:acc_results}
\end{figure}


\begin{example}\label{ex_Omni}
	\textit{Omnidirectional robot navigation with unknown drift.}
	We illustrate the CaCBF framework using a planar navigation problem for an omnidirectional robot modeled as a kinematic point mass. 
	Let the state be $x =  \matl p_x & p_y \matr^\rmT \in \BBR^2$, where $p_x\in\BBR$ and $p_y\in\BBR$ are the horizontal and vertical positions, respectively. The system evolves according to
	\begin{equation}
		\dot{x} = u + \theta_*,
		\label{eq:omni_dynamics}
	\end{equation}
	where $u = \matl v_x & v_y \matr^\rmT$ is the velocity control input, and $v_x\in\BBR$ and $v_y\in\BBR$ are the horizontal  and vertical velocity, respectively. In addition,  $\theta_* = \matl 0.3 & -0.2 \matr^\rmT$ represents the unknown wind or drift velocity.
	The initial condition is $x(0) = \matl0&5\matr^\rmT$.
	The objective is to navigate from $x(0)$ to a target $x_\rmd \isdef \matl 10 & 5 \matr^\rmT$ while avoiding a  circular obstacle centered at $x_{\rm obs} \isdef \matl 5 & 5 \matr^\rmT$ with radius $r_{\rm obs} \isdef 4.5$~m. We define the safety set $\mathcal{C}$ via the distance-based barrier function
	\begin{equation}
		h(x) = \|x - x_{\rm obs}\|^2 - r_{\rm obs}^2.
	\end{equation}
	To respect actuator limits, we bound the control input such that, for all $t\ge0$,
	\begin{equation}
		\|u(t)\| \leq 2 \text{ m/s}.
		\label{eq:input_constraints_omni}
	\end{equation}

	We first implement the proposed CaCBF. We select the CLF candidate $V(x) = \|x - x_\rmd\|^2$ with a nominal convergence rate $\lambda = 1$.
	The Lie derivatives are $L_f h (x) = 0$ and $L_G h (x)= 2(x - x_{\rm obs})^\rmT$, and the safety regressor is $\psi(x) = 2(x - x_{\rm obs})^\rmT$. Inserting these into the CLF-CBF-QP \eqref{eq:qp_cost_unified}--\eqref{eq:const_clf} yields the  optimization
	\begin{align}
		\big({u}_*(x,\hat\theta), \delta_*(x,\hat\theta)\big) = \operatorname*{argmin}_{(u, \delta)\in\BBR^2\times[0,\infty)} \quad  \frac{1}{2} \|u - u_{\rm n}\|^2& + \rho \delta^2, \\
		\text{s.t.} \quad 2(x - x_{\rm obs})^\rmT (u + \hat{\theta}) &\geq -\alpha(h(x)), \label{eq_safeQP_omni} \\
		2(x - x_\rmd)^\rmT (u + \hat{\theta}) &\leq -\lambda V(x) + \delta,
	\end{align}
	where $u_{\rm n}$ is a nominal proportional controller and the penalty weight is $\rho = 10^3$.
	We set the bounds and gains as $\theta_{\max} = 1$, $\hat{\theta}(0) = 0$, $\Gamma = 20 I_2$, $\gamma = 10$, and $\kappa = 1$.

	For comparison, we implement a standard R-CBF. We use a fixed parameter estimate $\theta_\rme = 0$. To guarantee safety against all $\theta \in \Theta$, the robust controller subtracts a margin proportional to the worst-case uncertainty. Substituting the specific geometry of the obstacle into \eqref{eq:robust_cbf} generates the constraint
	\begin{equation}
		2(x - x_{\rm obs})^\rmT u - 2\|x - x_{\rm obs}\| \theta_{\max} \geq -\alpha(h(x)),
	\end{equation}
	which replaces \eqref{eq_safeQP_omni} in the QP formulation.
	
	Figure \ref{fig:omni_results} compares the performance of both controllers. While both methods successfully reach the target without collision, the R-CBF is forced to take a wider, inefficient path. The robust controller maintains an excessive safety buffer due to its worst-case assumption that the wind might always push the robot toward the obstacle.
	In contrast, the adaptive controller learns the drift profile online, allowing it to operate safely with a minimal buffer. By identifying the true wind vector, the CaCBF tracks the obstacle boundary closely, reducing the deviation from the nominal straight-line path and demonstrating the efficiency gains of the proposed framework.

	We quantify performance using four metrics: the minimum safety margin $h_{\min} \isdef \min_{t} h(x(t))$; the total path length $\ell_{\rm path} \isdef \int_{0}^{T} \|u(t)\|\,\rmd t$, where $T=100$~s is the total simulation duration; the time to reach the target $T_{\rm reach} \isdef \min \{t \ge 0 : \|x(t) - x_\rmd\| \le 0.5\}$; and the time-averaged clearance $\eta \isdef \frac{1}{T} \int_{0}^{T} h(x(t)) \, \rmd t$.
	The metrics highlight the operational advantage of the CaCBF approach. While both controllers complete the mission, the R-CBF maintains a larger minimum margin of $h_{\min} = 0.42$ m, which forces it onto a longer path $\ell_{\rm path} \approx 47.72$ m and results in a later arrival time of $T_{\rm reach} = 8.40$ s.
	In contrast, the CaCBF safely utilizes the available space, operating with a much smaller margin of $h_{\min} = 0.04$ m, which allows for a tighter trajectory $\ell_{\rm path} \approx 46.97$ m and faster target convergence in $T_{\rm reach} = 7.42$ s.
	This reduction in conservatism is further confirmed by the average clearance metric: R-CBF operates with a wide margin $\eta = 0.79$ m, whereas CaCBF reduces conservatism to $\eta = 0.52$ m, demonstrating that the system recovers the safe operating space that the R-CBF discards.
	\exampletriangle
\end{example}

\begin{figure}[thpb]
	\centering
	\includegraphics[width=.9\textwidth]{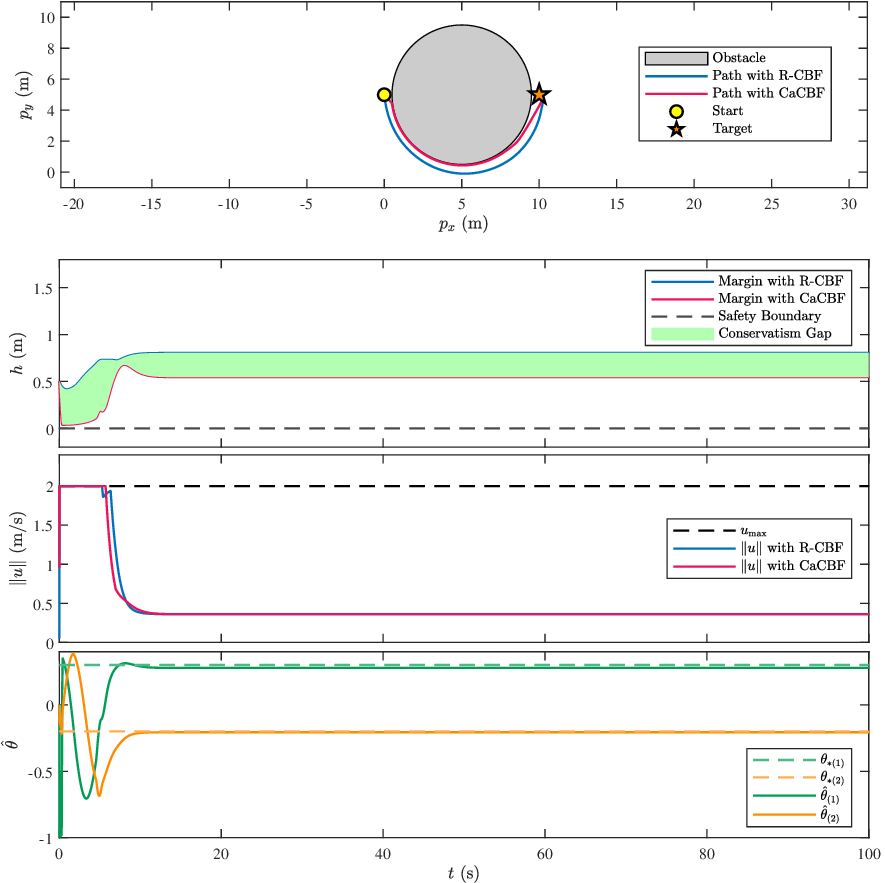} 
	\caption{Example \ref{ex_Omni}: Simulation results for the omnidirectional robot navigation with unknown drift. The top subplot shows the 2D trajectories: while both controllers reach the target (star), the R-CBF (blue) takes a longer route to maintain a robust margin against worst-case drift, whereas the CaCBF (red) adapts to the true drift and hugs the obstacle boundary efficiently. The second subplot tracks the safety margin $h(x)$, where the green shaded region visualizes the \textit{conservatism gap}: R-CBF maintains a larger buffer $h_{\min}=0.42$ m than the adaptive controller buffer $h_{\min}=0.04$ m. The third subplot confirms that the control inputs remain within the saturation limit $\|u\| \le 2$~m/s. The bottom subplot shows the parameter estimates $\hat{\theta}$ (solid) converging toward the true drift values $\theta_*$ (dashed), which enables the CaCBF to safely reduce the conservatism gap.}
	\label{fig:omni_results}
\end{figure}


\begin{example}\label{ex_Drone_Gate}
	\textit{Planar drone subject to unknown crosswind navigating a passage.}
	We illustrate the capability of the framework to handle complex safety landscapes using a passage scenario. A planar drone modeled as a double integrator must navigate through a tight gate formed by two obstacles while subject to a strong, unknown crosswind. Let the state be $x = \matl p^\rmT & v^\rmT \matr^\rmT \in \BBR^4$, where $p\in\BBR^2$ is position and $v\in\BBR^2$ is velocity. The dynamics is given by
	\begin{equation}
		\dot{x} = \begin{bmatrix} v \\ 0 \end{bmatrix} + \begin{bmatrix} 0 \\ \frac{1}{\overline{m}} I_2 \end{bmatrix} \theta_* + \begin{bmatrix} 0 \\ \frac{1}{\overline{m}} I_2 \end{bmatrix} u,
		\label{eq:drone_dynamics}
	\end{equation}
	where $\overline{m}=1$~kg is the mass, $u=\matl u_1&u_2\matr ^\rmT\in\BBR^2$ is the thrust control input, and $\theta_* = \matl 1 & 1.5 \matr^\rmT$ represents the unknown constant wind force.
	The objective is to travel from $p(0) \isdef \matl -8 & 6 \matr^\rmT$ to $p_{\rm d} \isdef \matl 8 & -6 \matr^\rmT$.
	The initial velocity is $v(0)=\matl 0 & 0 \matr^\rmT$. 
	The gate is defined by two circular obstacles centered at $c_1 \isdef \matl -5 & 0 \matr^\rmT$ and $c_2 \isdef \matl 5 & 0 \matr^\rmT$, both with radius $r \isdef 4.8$~m. This creates a physical gap of only $0.4$~m at the narrowest point, which is the origin.The safety set $\mathcal{C}$ is the intersection of the safe regions for both obstacles, defined by the functions
	\begin{equation}
		\bar h_{i}(x) = \|p - c_i\|^2 - r^2, \quad i \in \{1, 2\}.
	\end{equation}
	
	We bound the control input magnitude such that, for all $t\ge0$,
	\begin{equation}
		\|u_{i}(t)\| \leq 10 \text{ N}, \text{~for~} i=1,2.
		\label{eq:drone_input_constraints}
	\end{equation}
	
	Since the position constraints $\bar h_{i}$ have relative degree 2 with respect to the input, we employ the backstepping approach \cite{nguyen2016exponential, xiao2021high}. We define the extended barrier functions $h_{i}(x) \isdef \dot{\bar h}_{i}(x) + 2 \bar h_{i}(x)$.
	We first implement the proposed CaCBF. We utilize a nominal PD controller $u_{\rm n} \isdef 4 (p_{\rm d} - p) - 4 v$ to drive the drone toward $p_{\rm d}$.
	We select the CLF candidate $V(x) = \frac{1}{2} \|v + p - p_{\rm d}\|^2$ based on the error states to enforce convergence with a nominal rate $\lambda = 1$.
	We explicitly calculate the safety Lie derivatives $L_G h_{i} = \frac{2}{\overline{m}}(p - c_i)^\rmT$ and the regressor $\psi_i(x) = \frac{2}{\overline{m}}(p - c_i)^\rmT$.
	Inserting these into the CLF-CBF-QP \eqref{eq:qp_cost_unified}--\eqref{eq:const_clf} yields the optimization
	\begin{align}
		\big(u_*(x,\hat\theta), \delta_*(x,\hat\theta)\big) = \operatorname*{argmin}_{(u, \delta) \in \BBR^2 \times \BBR} \quad  &\frac{1}{2} \|u - u_{\rm n}\|^2 + \rho \delta^2  \\
		\text{s.t.} \quad \frac{2}{\overline{m}}(p - c_i)^\rmT (u + \hat{\theta}) +2\|v\|^2 + 14(p - c_i)^\rmT v  &\geq - 10 \bar h_i(x) , \quad \quad {\rm~for~all~} i \in \{1,2\}, \label{eq_safeQP_drone} \\
		\frac{1}{\overline{m}}(v + p - p_{\rm d})^\rmT (u + \hat{\theta}) +(v + p - p_{\rm d})^\rmT v  &\leq - \lambda V(x) + \delta,
	\end{align}
	where the class $\mathcal{K}$ function is $\alpha(h) = 5h$ and the penalty weight is $\rho = 10^3$.
	We set the bounds and gains as $\theta_{\max} = 3$, $\hat{\theta}(0) = 0$, $\Gamma = 10 I_2$, and implement CaCBF.
	
	For comparison, we implement  R-CBF using a fixed estimate $\theta_\rme = 0$. To guarantee safety against all $\theta \in \Theta$, the robust controller subtracts a margin proportional to the worst-case uncertainty. Substituting the specific geometry into the robust condition yields the constraints
	\begin{equation}
		2(p - c_i)^\rmT u - 6\|p - c_i\| +2\|v\|^2 + 14(p - c_i)^\rmT v \geq  - 10 \bar h_i(x), \quad {\rm~for~all~} i \in \{1,2\},
	\end{equation}
	which replace \eqref{eq_safeQP_drone} in the QP formulation.
	Figure \ref{fig:drone_multi_objective} illustrates the critical limitation of the robust approach. The R-CBF assumes the worst-case wind could push the drone into the walls, mathematically inflating the obstacles by a robust margin, which is visualized as dotted lines. In this scenario, these margins overlap, effectively closing the gate. Consequently, R-CBF perceives the passage as impassable and braking to a halt before entering the gap.
	In contrast, the CaCBF learns the true wind profile, allowing it to shrink the safety margins dynamically. By adapting to the true wind forces, CaCBF identifies the passage as safe and successfully navigates through the passage.
	
	We quantify performance using four metrics: the minimum safety margin $$h_{\min} \isdef \min_{t\in[0,T]} \left(\min_{i\in\{1,2\}} \bar h_{i}(x(t))\right)^{1/2};$$ the total control effort $E_{\rm control} \isdef \int_{0}^{T} \|u(t)\|\,\rmd t$; the time to reach the target $T_{\rm reach}$; and the average clearance $\eta \isdef \frac{1}{T} \int_{0}^{T} h_{\min}(t) \,\rmd t$, where $T=12$~s is the simulation duration.
	The results confirm the topological advantage of CaCBF. The R-CBF maintains a larger clearance $h_{\min}=0.22$~m and  $\eta=0.35$~m because the constraints preclude entry into the gap, resulting in a failed mission (i.e., timeout at  $T_{\rm reach} = 12$ s).
	The CaCBF, by contrast, operates with a much tighter margin $h_{\min}=0.04$ m to successfully traverse the bottleneck.
	Consequently, the CaCBF controller utilizes higher control effort $E_{\rm control} \approx 40.11$ N$\cdot$s compared to the R-CBF control effort $E_{\rm control} \approx 29.03$ N$\cdot$s as CaCBF performs the necessary maneuvers to complete the mission that is infeasible for R-CBF.
	\exampletriangle
\end{example}

\begin{figure}[thpb]
	\centering
	\includegraphics[width=.9\textwidth]{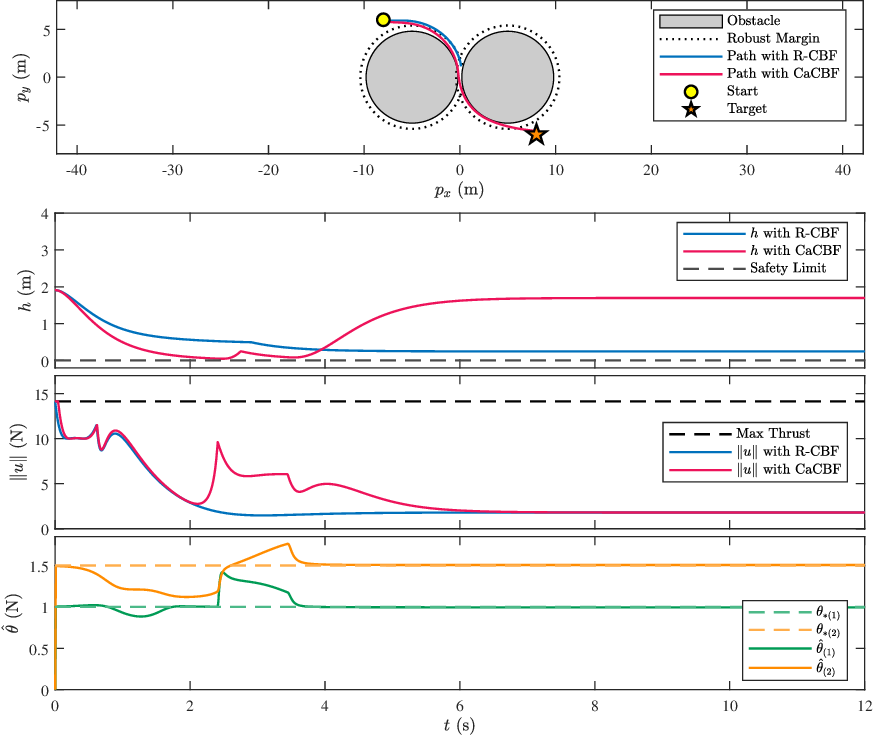} 
	\caption{Example \ref{ex_Drone_Gate}: Simulation results for the planar drone subject to unknown crosswind navigating a  passage. The top subplot shows the trajectories, where the R-CBF (blue) is blocked because the robust margins (dotted) of the two obstacles overlap, effectively closing the gate. The CaCBF (red) learns the wind parameters, shrinks these margins, and successfully passes through the gap to reach the target (star). The second subplot tracks the minimum distance to the nearest wall, confirming that the CaCBF safely utilizes the  available space ($h>0$). The third subplot shows the control effort $\|u\|$, illustrating the maneuvering required to pass the gate. The bottom subplot confirms the convergence of the parameter estimates $\hat{\theta}$ (dotted) toward the true wind values $\theta_*$ (dashed).}
	\label{fig:drone_multi_objective}
\end{figure}
\section{Conclusion and Future Work}
\label{sec:conclusion}

This work introduced the composite adaptive control barrier function (CaCBF) to reconcile safety guarantees with parametric uncertainty. By deriving the adaptation law from a unified energy function, one that integrates prediction error with the safety barrier, the framework ensures the system actively dissipates the energy required to breach safety constraints. This coupling guarantees the forward invariance of the safe set without the conservatism inherent to worst-case bounds. Simulations of adaptive cruise control, an omnidirectional robot, and a planar drone demonstrate that the controller dynamically adjusts safety margins as estimates converge, recovering the performance lost by robust methods while enforcing strict safety.

Future research must address the  assumptions of the current framework. Extending the method to unstructured dynamics may require universal approximators, such as neural networks or kernel methods, to learn residual terms online. Furthermore, while the current quadratic program handles input constraints, ensuring recursive feasibility under large initial parameter errors remains an open theoretical challenge. Finally, potential extensions for real-world deployment include the development of output-feedback observers to handle partial state information and experimental validation on hardware platforms to assess robustness against environmental noise.

\bibliographystyle{unsrt}
\bibliography{ref}

\end{document}